\newcommand{\figref}[1]{Figure \ref{#1}}
\newcommand{\DEF}[1]{{\bfseries{\textcolor{black}{#1}}}}
\def\ie{{\it i.e.}~}  
\def\cf{{\it cf.}~}
\def\eg{{\it e.g.}~}  
\newcommand{\scc}[0]{{\sc scc}}
\def\free{F}
\def\grows{G}
\def\destroys{D}
\def\merges{M}
\def\freeS{\free-sensitivity}
\def\growsS{\grows-sensitivity}
\def\destroysS{\destroys-sensitivity}
\def\mergesS{\merges-sensitivity}
\def\sy{synchronous}
\def\as{asynchronous}
\def\seqable{sequentialisable}
\def\normal{normal} 
\def\frableC{critical cycle} 
\def\fredC{\frableC} 
\newcommand{\Ban}[0]{{\sc ban}}
\newcommand{\tg}[0]{transition graph}
\newcommand{\sig}[0]{{\sc atg}}
\newcommand{\gig}[0]{{\sc etg}}
\newcommand{\ltf}[0]{transition function}
\newcommand{\fullstop}{\text{. }} 
\newcommand{\comma}{\text{, }} 
\newcommand{\set}[1]{\ensuremath{\{\, #1\, \}}}
\renewcommand{\bar}[2]{\ensuremath{\overline{#1}^{\, \scriptstyle #2}}} 
\renewcommand{\implies}[0]{\Rightarrow}
\def\Rar{\ensuremath{ \ \Rightarrow\ }}
\def\sbool{\ensuremath{\mathbf{s}}}
\DeclareMathOperator{\Hdist}{\mathrm{d}}
\DeclareMathOperator{\Hdiff}{\mathrm{D}}
\def\B{\ensuremath{\mathbb{B}}}
\def\Bn{\ensuremath{\B^n}}
\def\naturals{\ensuremath{\mathbb{N}}}
\newcommand{\att}[1]{\ensuremath{[#1]}}
\newcommand{\atta}[1]{\ensuremath{[#1]_{\mathrm{a}}}}
\def\Aa{\ensuremath{\mathcal{A}_{\mathrm{a}}}}
\def\Ae{\ensuremath{\mathcal{A}}}
\def\Ba{\ensuremath{\mathcal{B}_{\mathrm{a}}}}
\def\Be{\ensuremath{\mathcal{B}}}
\def\orbita{\ensuremath{\mathcal{O}_{\mathrm{a}}}}
\def\orbit{\ensuremath{\mathcal{O}}}
\def\N{\ensuremath{\mathcal{N}}}
\def\G{\ensuremath{ \mathbf{G}}}   
\def\H{\ensuremath{ \mathbf{H}}}   
\def\V{\ensuremath{ \mathbf{V}}}   
\def\C{\ensuremath{ \mathbf{C}}}   
 \def\W{\ensuremath{ \mathrm{W}}}   
\def\A{\ensuremath{ \mathbf{A}}}   
\def\VA{\ensuremath{ (\V,\A)}}
\newcommand{\Vin}[0]{\ensuremath{\V^{-}}}
\newcommand{\sign}[0]{\ensuremath{\mathrm{sign}}}
\def\U{\ensuremath{ \mathbf{U}}} 
\def\nU{\ensuremath{\overline{\U}}}
\def\u{\ensuremath{\mathrm{u}}}
\def\frus{\mathbf{\scriptstyle FRUS}}
\newcommand{\uf}[0]{\ensuremath{  \mathrm{f}\hspace{0.1ex}}}
\newcommand{\uh}[0]{\ensuremath{  \mathrm{h}\hspace{0.1ex}}}
\def\TG{\ensuremath{\mathcal{T}}}    
\def\T{\ensuremath{\mathrm{T}}}      
\newcommand{\AT}[0]{\ensuremath{\T_{\mathrm{a}}}}   
\newcommand{\SIG}[0]{\ensuremath{ \TG_{\mathrm{a}}} }
\newcommand{\SIGbis}[0]{\ensuremath{ \TG_{\mathrm{a}}\,'} }
\newcommandx{\trans}[2][1=,2=0.5]{\,
  \raisebox{.4ex}{
    \begin{tikzpicture}
      \path (0,0) edge[->, >=angle 60] node[above,pos=0.5,font=\scriptsize]
      {\ensuremath{\scriptstyle #1}} (#2,0);
    \end{tikzpicture}}\ }
\newcommandx{\transRT}[2][1=,2=white]{\,
  \raisebox{.4ex}{
    \begin{tikzpicture}[descr/.style={fill=#2,inner sep=2.5pt}]
      \path (0,0) edge[->>, >=angle 60] node[above,font=\scriptsize]
           {\ensuremath{\scriptstyle #1}} (0.6,0);
  \end{tikzpicture}}\ }
\newcommandx{\seq}[2][1=,2=0.5]{\,
  \raisebox{.4ex}{
    \begin{tikzpicture}
      \path (0,0) edge[-open triangle 60] node[above,pos=0.45,font=\scriptsize]
      {\ensuremath{\scriptstyle #1}} (#2,0);
    \end{tikzpicture}}\ }
\newcommandx{\seqRT}[2][1=,2=0.6]{\,
  \raisebox{.4ex}{
    \begin{tikzpicture}
     \path  (0,0) +(-#2,0) edge[-open triangle 60] node[above,pos=0.45,font=\scriptsize]
      {\ensuremath{\scriptstyle #1}} (-0.1,0) ;\draw[-open triangle 60] (-0.1,0) -- +(.2,0);
    \end{tikzpicture}}\ }
\newcommandx{\pll}[2][1=,2=0.5]{\,
  \raisebox{.4ex}{
    \begin{tikzpicture}
      \path (0,0) edge[-triangle 60] node[above,pos=0.4,font=\scriptsize]
      {\ensuremath{\scriptstyle #1}} (#2,0);
    \end{tikzpicture}}\ }
\newcommandx{\pllRT}[2][1=,2=0.6]{\,
  \raisebox{.4ex}{
    \begin{tikzpicture}
     \path  (0,0) +(-#2,0) edge[-triangle 60] node[above,pos=0.45,font=\scriptsize]
      {\ensuremath{\scriptstyle #1}} (-0.1,0) ;\draw[-triangle 60] (-0.1,0) -- +(.2,0);
    \end{tikzpicture}}\ }
\newcommandx{\nonseq}[2][1=,2=0.5]{
   \raisebox{.1ex}{
    \begin{tikzpicture}
      \path (0,0) +(-#2,0) edge[-,line width=1mm] node[above,pos=0.52,font=\scriptsize]
      {\ensuremath{\scriptstyle #1}} (-0.1,0);
      \draw[-triangle 60] (-0.1,0) -- +(.2,0);
    \end{tikzpicture}
 }}
\begin{document}
\bibliographystyle{splncs_srt}
\setlength{\parindent}{0cm}

%
%
%
\title{Synchronism {\it vs} Asynchronism\\ in Boolean automata networks}
\titlerunning{Synchronism {\it vs} Asynchronism}
\author{Mathilde Noual\inst{1,2}}
\authorrunning{M. Noual}
\institute{Laboratoire I3S, UMR 7271 - UNS CNRS, Universit\'e de Nice-Sophia
  Antipolis, 06900 Sophia Antipolis, France; \texttt{noual@i3s.unice.fr} \and IXXI, Institut
  rh\^one-alpin des syst\`emes complexes, Lyon, France}
\maketitle
\begin{abstract} We show that synchronism can 
  significantly impact on network behaviours, in particular by filtering
  unstable attractors induced by a constraint of asynchronism. We investigate
  and classify the different possible impacts that an addition of synchronism
  may have on the behaviour of a Boolean automata network. We show how these
  relate to some strong specific structural properties, thus supporting the idea
  that for most networks, synchronism only shortcuts \as{} trajectories. We end
  with a discussion on the close relation that apparently exists between
  sensitivity to synchronism and non-monotony.
\end{abstract}
\begin{keywords} 
  Automata network, synchronism, asynchronism, attractor, updating mode, elementary
  transition, atomic transition.
\end{keywords}
%
%

\subsection*{Introduction}

In works involving automata networks, synchronism has often either been
considered as a founding hypothesis, as in \cite{Kauffman1969} and the many
studies that followed in its lead, or, on the contrary, in lines with
\cite{Thomas1983}, it has been disregarded altogether to the benefit of pure
asynchrony. In some applied contexts, \emph{theoretical synchronism}
is sometimes understood as \emph{simultaneity} although this restrictive
interpretation relies on a formalisation of duration which conflicts with the
discrete nature of automata networks.  More simply, the synchronous occurrence
of two changes in a network can be regarded as occurrences that are close enough
in time to disallow any other significant event in between them. This naturally
defines a much more general notion of time flow that is strongly \emph{relative}
to the set of events underwent by the network. And thus it yields substantial
representational capacity to the notion of synchronism, justifying the attention
that we propose to give to it in this paper.\medskip

Comparisons have been made between different kinds of ways of updating automata
states, involving variable degrees of synchronism in both probabilistic
\cite{Chassaing2007,Fates2009,Fates2006,Regnault2006,Schabanel2009} (with
cellular automata) and deterministic frameworks
\cite{Aracena2009,Elena2004M,Elena2009T,Goles2008,Robert1986B,Robert1995B}.  In
particular, for the algorithmic purpose of finding the shortest path to a stable
configuration, Robert\cite{Robert1995B} compared Boolean automata network
behaviours under the parallel and sequential update schedules.  In this context,
he noted three ``frequent (but not systematic) phenomena'' that can be observed
through the effect of parallelisation: the ``\emph{bursting}'', the
``\emph{aggregation}'' and the ``\emph{implosion}'' of attraction basins. Here,
we focus on attractors, both stable and unstable.  And considering more
generally state transition systems rather than just deterministic dynamical
systems, we propose to investigate synchronism {\it per se}, and analyse its
\emph{input} to the design of Boolean automata network behaviours. More
precisely, we propose to consider \emph{asynchronous} transition graphs
representing the set of all punctual and atomic events of a network and we
propose to explore the consequences of adding to it a synchronous transition,
representing a new possibility to perform a punctual but non-atomic change.  We
propose to identify the cases where such an addition of synchronism changes
substantially a network's possible asymptotic behaviours or its evolutions
towards them. Thus, we are looking for networks for which synchronism does not
just shortcut \as{} trajectories but rather also adds some new ones that can not
be \emph{mimicked} asynchronously. This leads to classify the possible impacts
of \emph{non-sequentialisable transitions} and then the sensitivity of networks
to synchronism.






\subsection*{Preliminaries}


\paragraph*{Notations --} 
By default, $\V=\set{0,\ldots,n-1}$ denotes a set of $n\in\naturals$ automata
numbered from $0$ to $n-1$. We let $\B=\set{0,1}$. Any $x\in\Bn$ is called a
\DEF{configuration} and its component $x_i\in\B$ is regarded as the \DEF{state}
of automaton $i\in\V$.  In this paper, special attention is paid to switches of
automata states starting in a given configuration. For this reason, we introduce
the following notations:
\begin{multline*}
  \forall x = x_0 \ldots x_{n-1} \in \Bn,  \forall i\in \V,\ \bar{x}{i} = x_0
  \ldots x_{i-1}\, \neg x_i\, x_{i+1} \ldots x_{n-1}\\\text{and } \forall \W =
  \W' \uplus \set{i} \subseteq \V,\ \bar{x}{\W} = \bar{(\bar{x}{i})}{\W'} =
  \bar{(\bar{x}{\W'})}{i} \fullstop
  \label{barre}
\end{multline*}
%
%
Also, to compare two configurations $x,y\in\Bn$, we use: $ \Hdiff(x,y)=
\set{i\in \V;\ x_i\neq y_i}$ and the Hamming distance
$\Hdist(x,y)=|\Hdiff(x,y)|$.
%
%
Finally, to switch values from $\B$ to $\set{-1,1}$, we let
$\sbool:b\in\B\mapsto b-\neg b\in\set{-1,1}$.



\paragraph*{Networks --}
A \DEF{Boolean automata network} (\Ban) of size $n$ is comprised of $n$
interacting automata. Formally, it is a set $\N=\set{\uf_i\,;\, i\in\V}$ of $n$
Boolean functions specifying ``how the net of automata works''. Function
$\uf_i:\B^n\to \B$ specifies the behaviour of automaton $i\in\V$ in any
configuration $x\in\Bn$. It is called the \DEF{\ltf{}} of $i$.  We 
focus on functions that are \emph{locally monotone} w.r.t.  all their
components, \ie $\forall i\in\V, \forall j\in\V$ we assume: 
\begin{eqnarray}
  \text{that either} && \forall x\in\Bn,\ \sbool(x_j) \cdot
  (\uf_i(x)-\uf_i(\bar{x}{j})) \geq 0 \label{Eq: pos} \\
  \text{or} && \forall x\in\Bn,\ \sbool(x_j) \cdot
  (\uf_i(x)-\uf_i(\bar{x}{j})) \leq 0
  \fullstop\label{Eq: neg}
  \label{HYP LM}
\end{eqnarray}
At the end of this paper, non-monotony is discussed. Until then, we assume all
\Ban{s} to be \DEF{monotone}, that is, to involve \ltf{s} that are locally
monotone w.r.t. all their components.


\paragraph*{Network structures --} 
The \DEF{structure} of $\N$ is the digraph $\G=\VA$ whose node set is $\V$
(thus, automata are also called nodes) and whose arc set is: $ \A\ =\ \set{
  (j,i)\in\V^2;\, \exists x\in\Bn,\ \uf_i(x)\neq \uf_i(\bar{x}{j})}$.  $\Vin(i)$
denotes the in-neighbourhood of $i\in \V$ in $\G$. The local monotony of \ltf{s}
allows us to \DEF{sign} the arcs of $\G$. $\forall (j,i)\in\A$ and $\forall
x\in\Bn$ s.t. $ \uf_i(x)\neq \uf_i(\bar{x}{j})$, we let
  $ \sign(j,i)= \sbool(x_j)\cdot
  (\uf_i(x)-\uf_i(\bar{x}{j})) = \sbool(x_j)\cdot\sbool(\uf_i(x))$
which equals $+1$ if \eqref{Eq: pos} is satisfied and $-1$ if \eqref{Eq: neg}
is. We let $\sign(j,i)=0$ when $(j,i)\notin\A$. Naturally, we define the sign of
a path or cycle in $\G$ as the product of the signs of the arcs it
involves. Thus, a positive path globally transmits ``information'' directly
whereas a negative one transmits its negation.


\paragraph*{Instabilities and frustrations --}
For every $x\in\Bn$, we define the set: $ \U(x) = $ $ \set{i\in\V;\ \uf_i(x)\neq
  x_i}\fullstop $ Automata in $\U(x)$ are said to be \DEF{unstable} (or
``calling for a change or updating''\cite{Remy2003}) in $x$ and those in $\nU(x)
= \V\setminus \U(x)$ are said to be \DEF{stable} in $x$. Informally, the number
$\u(x)=|\U(x)|$ of instabilities in $x$ can be understood as the \emph{velocity}
or \emph{momentum} of $\N$ in $x$. Configurations $x$ such that $\u(x)=0$ are
called \DEF{stable}.  Our first lemma which will be very useful in the sequel,
relates instabilities to \Ban{} structures. Its proof is simple so we skip it.
\begin{lemma}[loops]
$  
  \forall i\in\V,\, \forall x\in\Bn, ~ i\in\nU(x)\cap\nU(\bar{x}{i})\Rightarrow
  \sign(i,i)=+1 \text{~ and }~i\in\U(x)\cap\U(\bar{x}{i}) \Rar
  \sign(i,i)=-1\fullstop
  $
\label{lem:loop}
\end{lemma}



$\forall x\in\Bn$, we introduce the set of \DEF{arcs that are
  frustrated\cite{Combe1997,Goles1980T,Toulouse1977b,Toulouse1977a}\label{frustrated}
  in $x$}:
\begin{equation}
 \frus(x)\ =\ \set{ (j,i)\in\A\,;\, \sbool(x_j)\cdot \sbool(x_i) = - \sign(j,i)}
  \label{Eq: def frus}
\end{equation} 
Our second preliminary lemma states that adding frustrated arcs incoming an
unstable automaton cannot make it stable. Again, we skip its proof which mainly
relies on the local monotony of \ltf{s}.
\begin{lemma}[frustrations \& instabilities]  $\forall i\in \V,\ \forall x,y\in \Bn$:
$\big(\, i\in\U(x)\ \wedge\ \frus(x)\cap \Vin(i) \subset \frus(y)\cap \Vin(i)\,
  \big)\ \Rar\ i\in\U(y)\fullstop$
\label{lemma: adding frus}
\end{lemma}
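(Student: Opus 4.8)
\noindent\emph{Proof idea.} The plan is to reparametrise $\uf_i$ by the \emph{signed states} of its in-neighbours, turning local monotony into plain monotonicity, and then to read off both frustration and instability from these signed quantities. Fix $i$ and, for every $j\in\Vin(i)$ and every $x\in\Bn$, put $w^x_j=\sbool(x_j)\,\sign(j,i)\in\set{-1,1}$. Since $\uf_i$ only depends on the states of its in-neighbours and $x_j\mapsto w^x_j$ is a bijection of $\B$ onto $\set{-1,1}$, we may write $\uf_i(x)=g_i\big((w^x_j)_{j\in\Vin(i)}\big)$ for a suitable $g_i$. The first step is to check, straight from \eqref{Eq: pos}--\eqref{Eq: neg}, that $g_i$ is \emph{nondecreasing} in every coordinate for the order $-1<1$: along a positive arc \eqref{Eq: pos} says that raising $x_j$ cannot lower $\uf_i$, while along a negative arc \eqref{Eq: neg} says that lowering $x_j$ cannot lower $\uf_i$; in either case raising $w^x_j$ cannot lower $\uf_i$.

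Next I would rewrite the two notions occurring in the statement. Multiplying the defining identity in \eqref{Eq: def frus} by $\sign(j,i)$ gives, for $j\in\Vin(i)$, that $(j,i)\in\frus(x)$ \iff $w^x_j=-\sbool(x_i)$; and $i\in\U(x)$ means $\uf_i(x)\neq x_i$, that is $\sbool\big(g_i(w^x)\big)=-\sbool(x_i)$. So, writing $\sigma=\sbool(x_i)$, the frustrated in-arcs at $x$ are exactly the coordinates $j$ with $w^x_j=-\sigma$, and $i\in\U(x)$ asserts that $g_i(w^x)$ takes the extreme value lying on the side opposite to $\sigma$ (namely $0$ if $\sigma=1$ and $1$ if $\sigma=-1$).

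Finally I would combine the two. When $x$ and $y$ share the state of $i$ (the situation of interest, e.g. when $y$ is obtained from $x$ by switching in-neighbours of $i$), we have $\sbool(y_i)=\sigma$, so the hypothesis $\frus(x)\cap\Vin(i)\subseteq\frus(y)\cap\Vin(i)$ reads: every coordinate with $w^x_j=-\sigma$ also has $w^y_j=-\sigma$. Hence, coordinatewise, $w^y$ is pushed towards $-\sigma$ relative to $w^x$, and monotonicity of $g_i$ forces $g_i(w^y)$ onto the same extreme value as $g_i(w^x)$: concretely, for $\sigma=1$ one gets $w^y\le w^x$ with $g_i(w^x)=0$, whence $g_i(w^y)=0$ and $\uf_i(y)=0\neq y_i$, the case $\sigma=-1$ being symmetric; thus $i\in\U(y)$. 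The main obstacle is precisely the sign bookkeeping: the direction in which the inclusion tilts the signed inputs, and therefore the direction of the monotone comparison, is dictated by $\sbool(x_i)$, so the state of the target automaton must be tracked carefully throughout (the loop $(i,i)$, whose sign is in addition constrained by Lemma~\ref{lem:loop}, enters the very same coordinatewise comparison and needs no separate treatment). Establishing the monotone reparametrisation of $g_i$ from \eqref{Eq: pos}--\eqref{Eq: neg} is the only genuinely computational point, and it is exactly the appeal to local monotony announced before the statement.
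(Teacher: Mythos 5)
Your signed reparametrisation is sound, and since the paper itself skips this proof (saying only that it ``mainly relies on the local monotony of the \ltf{s}''), it is presumably just what the author intended: the three computations --- $g_i$ nondecreasing in each coordinate $w^x_j=\sbool(x_j)\sign(j,i)$, the rewriting $(j,i)\in\frus(x)\iff w^x_j=-\sbool(x_i)$, and $i\in\U(x)\iff g_i(w^x)$ lies on the side opposite to $x_i$ --- are all correct, and in the case you treat they do yield the coordinatewise comparison and the conclusion.

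However, your proof only covers the case $y_i=x_i$, which you introduce as ``the situation of interest''; the lemma as printed quantifies over \emph{all} $x,y\in\Bn$. This is a genuine gap with respect to the stated lemma, and it cannot be filled: when $y_i\neq x_i$ the statement is false. Indeed, frustration at $y$ then means $w^y_j=-\sbool(y_i)=+\sbool(x_i)$, so the inclusion of frustrated in-arcs pushes the coordinates frustrated at $x$ in the \emph{opposite} direction while leaving all other coordinates unconstrained, and no monotone comparison survives. Concretely, let $\uf_i(x)=x_0\wedge x_1\wedge x_2$ (three positive arcs; complete the network with, say, positive loops on $0,1,2$). For $(x_0,x_1,x_2,x_i)=(0,1,1,1)$, automaton $i$ is unstable and its only frustrated in-arc is $(0,i)$; for $(y_0,y_1,y_2,y_i)=(1,1,0,0)$, the frustrated in-arcs are $(0,i)$ and $(1,i)$ --- a strict superset, so the hypothesis holds under either reading of $\subset$ --- yet $\uf_i(y)=0=y_i$, so $i$ is stable in $y$. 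Your side condition is therefore not a convenience but a necessary correction of the statement; it is also exactly what the paper needs, since in the proof of Proposition~\ref{prop:romp} the lemma is applied to pairs $(x,x(t))$ with $i\in\Hdiff_t$, for which the topological ordering guarantees $x(t)_i=x_i$. You should say this explicitly --- prove the lemma under the extra hypothesis $x_i=y_i$ and point out (ideally with the counterexample above) that the unrestricted claim fails --- rather than silently restricting to the case that happens to work; as written, your text reads as if the remaining case had merely been left as routine.
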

%
%


\paragraph*{Transitions and and \tg{s} --}
An \DEF{elementary transition} of a \Ban{} $\N$ represents an \emph{effective},
\emph{punctual} and possible change in $\N$. It is any couple of configurations
$(x,y)\in\Bn\times \Bn$, noted $x\trans y$, which satisfies: $\emptyset \neq
\Hdiff(x,y) \subseteq \U(x)$.  The \DEF{size of an elementary transition}
$x\trans y$ equals $\Hdist(x,y)$.  Digraph $(\Bn,\set{x\trans y;\, x,y\in\Bn})$
is called the \DEF{elementary transition graph} (\gig) of $\N$. It represents
all punctual/elementary events that $\N$ can undergo.  There are two main types
of elementary transitions $x\trans y$. Those of size $\Hdist(x,y)>1$ are called
\DEF{non-atomic} or \DEF{synchronous} and are written $x\pll y$. Those of size
$\Hdist(x,y)=1$ are called \DEF{asynchronous} or \DEF{atomic} and are written
$x\seq y$ (they are s.t. $\exists i\in \V,\,y=\bar{x}{i}$). Digraph
$\SIG=(\Bn,\set{x\seq y;\, x,y\in\Bn})$ is called the \DEF{asynchronous
  transition graph} or \sig. It represents only those events that $\N$ can
undergo which involve only one local automaton state change.  The transitive
closure of $\trans$ (resp. $\seq$) is denoted by $\transRT$ (resp. $\seqRT$).
\DEF{Derivations} are ordered lists of these non necessarily elementary
transitions written $x^0\transRT x^1 \transRT \ldots \transRT x^{\ell-1}
\transRT x^\ell$ but in the sequel, we abuse language and also speak of a
\emph{derivation} $x\transRT y$.








\subsection*{Non-\seqable{} transitions and \frableC{s}}

A \DEF{cycle} of a \Ban{} $\N$ is a sub-graph of its structure $\G$ corresponding
to a closed directed walk, with possibly repeated nodes but no repeated edges.
$\forall x\in\Bn$, we say that a cycle $\C=(\V_{\C},\A_\C)$ of $\N$ is
\DEF{$x$-critical} if: $
\V_\C\subset\U(x)\ \wedge\ \A_\C\subset\frus(x)\fullstop$ Note that for an
isolated cycle, since $|\Vin(i)|=1,\, \forall i\in\V_\C$, a node is unstable if
and only if its sole incoming arc is frustrated.  A \DEF{critical cycle} of $\N$
is one that is $x$-critical for some $x\in\Bn$.  All main results of this paper
mention these types of cycles. This yields some importance to  Proposition
\ref{prop:critical cycles} below which derives from the following which, by
\eqref{Eq: def frus}, holds for any $x$-critical cycle $\C=(\V_\C,\A_\C)$ of
length $\ell$ and sign $\sbool$: $\displaystyle\prod_{(j,i)\in \A_\C} -
\sign(j,i) =$ $(-1)^\ell\times \sbool=\displaystyle\prod_{(j,i)\in
  \A_\C}\sbool(x_j)\cdot \sbool(x_i) = 1\fullstop $
\begin{wrapfigure}[22]{r}{0.3\textwidth}
  \centerline{\scalebox{0.7}{\input{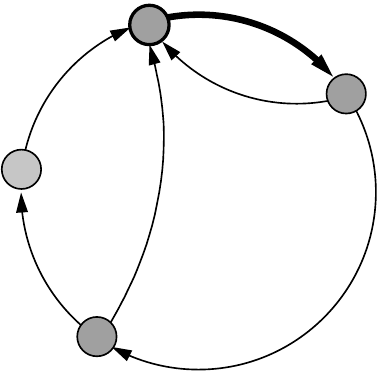_t}}}
  \caption{Signed \Ban{} structure whose Hamiltonian cycle $\C=(\V_{\C},\A_\C)$
    is as in Proposition~\ref{prop:critical cycles}. If $\uf_2:x\mapsto
    x_{3}\wedge(x_0\vee x_1)$, then $\C$ cannot be critical because $2\in \U(x)$
    and $\A_{\C}\subset \frus(x)$ cannot be satisfied at once.}
\label{Ex:Nefrable}
\end{wrapfigure}

\begin{proposition}
  A cycle that is critical is either positive with an even length or negative
  with an odd length.
\label{prop:critical cycles}
\end{proposition}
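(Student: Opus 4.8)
The plan is to read the conclusion off directly from the chain of equalities displayed just above the statement, which already holds for any $x$-critical cycle. Since a critical cycle is by definition $x$-critical for some $x\in\Bn$, it suffices to fix such an $x$ and compare the two ends of that chain.

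First I would justify each link. The leftmost equality $\prod_{(j,i)\in\A_\C}-\sign(j,i)=(-1)^\ell\times\sbool$ follows by pulling a factor $-1$ out of each of the $\ell$ arcs and recognising the remaining product $\prod_{(j,i)\in\A_\C}\sign(j,i)$ as the sign $\sbool$ of the cycle. The middle equality is the frustration hypothesis itself: since $\A_\C\subset\frus(x)$, every arc $(j,i)$ of $\C$ satisfies $\sbool(x_j)\cdot\sbool(x_i)=-\sign(j,i)$ by \eqref{Eq: def frus}, so the two products agree factor by factor.

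The step I expect to require the most care is the rightmost equality $\prod_{(j,i)\in\A_\C}\sbool(x_j)\cdot\sbool(x_i)=1$. Here I would use that $\C$ is a closed directed walk, so every node has equal in- and out-degree inside $\C$, even when it is visited several times. Thus each passage through a node $k$ contributes one factor $\sbool(x_k)$ as a tail (from the arc leaving $k$) and one factor $\sbool(x_k)$ as a head (from the arc entering $k$); overall each $\sbool(x_k)$ occurs with even multiplicity. Since $\sbool(x_k)\in\set{-1,1}$ gives $\sbool(x_k)^2=1$, the whole product collapses to $1$.

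Combining the two ends yields $(-1)^\ell\times\sbool=1$, and it only remains to split on the parity of $\ell$. If $\ell$ is even then $(-1)^\ell=1$ forces $\sbool=+1$, so the cycle is positive; if $\ell$ is odd then $(-1)^\ell=-1$ forces $\sbool=-1$, so the cycle is negative. This is exactly the claimed dichotomy, and it shows moreover that these two cases are the only ones possible.
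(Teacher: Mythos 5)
Your proof is correct and takes essentially the same route as the paper: the proposition is derived from exactly the chain of equalities you justify, ending with the same parity split on $(-1)^\ell\times\sbool=1$. The only difference is that you make explicit what the paper leaves implicit, in particular the even-multiplicity argument showing $\prod_{(j,i)\in\A_\C}\sbool(x_j)\cdot\sbool(x_i)=1$ for a closed directed walk with possibly repeated nodes.
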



Let us emphasise that although a positive (resp. negative) cycle with an even
(resp. odd) length is critical when it is isolated, when embedded in larger
structures, it may loose this property (\cf \figref{Ex:Nefrable}).\smallskip




Now, the next result sets the backbone of the article: it shows how critical
cycles are the main structural aspects of a \Ban{} underlying its possibility to
perform synchronous changes that cannot be mimicked asynchronously. First, let
us say that $x\trans y$ is \DEF{\seqable{}} if it is \as{} or if it can be
broken into an derivation $x\transRT y$ involving smaller transitions.  A
synchronous transition $x\pll y$ which is not \seqable{} is called a
\DEF{\normal{}} \DEF{transition} and is rather written $x\nonseq y$. 



\smallskip

\begin{proposition}[\seqable{} transitions and \fredC{s}] Let  $x\pll y$ be a
  \sy{} transition of an arbitrary \Ban{} $\N$. There is a derivation $ x\trans
  \bar{x}{\Hdiff_0} \trans \bar{x}{\Hdiff_0\uplus \Hdiff_1} \trans \ldots \trans
  \bar{x}{\Hdiff_0\uplus \Hdiff_1 \ldots\uplus \Hdiff_{m-1}}=y $ of $\N$ such
  that $\Hdiff(x,y)=\biguplus_{t<m}\Hdiff_t$ and $\forall t<m,\, |\Hdiff_t|>1$
  holds only if all automata of $\Hdiff_t$ belong to the same $x$-\fredC{}.
  \label{prop:romp}
\end{proposition}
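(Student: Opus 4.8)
Write $\W=\Hdiff(x,y)$; by definition of a synchronous transition $\W\subseteq\U(x)$ and $|\W|>1$. The plan is to switch the automata of $\W$ a few at a time, always from a configuration in which they are still unstable, and to show that we are forced to switch several of them at once only along a \frableC{}. The engine is Lemma~\ref{lemma: adding frus}: an automaton $i$ unstable in $x$ stays unstable in any $\bar{x}{S}$ with $S\subseteq\W$ as long as no incoming arc of $i$ \emph{loses} its frustration. Since switching $j$ before $i$ flips the frustration of $(j,i)$, the only dangerous incoming arcs of $i$ are those frustrated in $x$ and coming from $\W$. This suggests recording these dependencies in the digraph $\H$ on node set $\W$ whose arcs are $\set{(j,i)\in\frus(x)\,;\,i,j\in\W}$.

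First I would peel off \emph{sink} strongly connected components of $\H$. Let $K$ be an s.c.c.\ of $\H$ with no $\H$-arc leaving it, and switch all of $K$ first: this is legal since $K\subseteq\W\subseteq\U(x)$. Moreover the residual transition $\bar{x}{K}\pll y$ is again synchronous. Indeed, for any remaining $k\in\W\setminus K$ and any in-neighbour $j\in K$ of $k$, the arc $(j,k)$ is \emph{not} in $\frus(x)$ (else it would be an $\H$-arc leaving $K$); hence switching $K$ can only \emph{add} frustration on the incoming arcs of $k$ coming from $K$, leaving its other incoming arcs untouched, so $k$ stays unstable by Lemma~\ref{lemma: adding frus} and $\W\setminus K\subseteq\U(\bar{x}{K})$. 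Switching $K$ first and recursing on $\W\setminus K$ from $\bar{x}{K}$ therefore builds the announced derivation block by block; note that arcs internal to $\W\setminus K$ keep their frustration status (their endpoints did not move), so the recursion is driven by $\H$ restricted to $\W\setminus K$ and the \frableC{}s it produces are again $x$-critical.

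It remains to control the size of each peeled block $K$. A singleton $K$ gives a harmless atomic transition. If $|K|>1$ then $K$ is strongly connected in $\H$, so all its arcs lie in $\frus(x)$ and all its nodes lie in $\U(x)$: every directed cycle of $K$ is thus an $x$-\frableC{}, and in particular $K$ contains one. The delicate point, and the step I expect to be the real obstacle, is to show that \emph{all} of $K$ lies on \emph{one} \frableC{}, i.e.\ that the whole block can be threaded by a single closed walk of $\G$ that repeats no edge (the paper's notion of a cycle). Strong connectivity only yields a closed walk visiting every node of $K$, which may be forced to reuse an edge; upgrading it to an edge-simple closed walk is a supereulerian-type requirement that fails for arbitrary digraphs. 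This is exactly where I would bring in the sign structure: Proposition~\ref{prop:critical cycles} forces every \frableC{} to be positive with even length or negative with odd length, and local monotony constrains, at a node of in-degree $\geq 2$, which combinations of incoming frustrations can coexist with instability (Lemmas~\ref{lem:loop} and~\ref{lemma: adding frus}). I would use these constraints to rule out the obstructing ``two disjoint entries into a common node'' patterns and to assemble a single \frableC{} through $K$. Once this structural lemma is in hand, concatenating the peeled blocks in their peeling order yields a derivation $x\trans\bar{x}{\Hdiff_0}\trans\cdots\trans y$ with $\Hdiff(x,y)=\biguplus_{t<m}\Hdiff_t$ in which every block of size $>1$ lies on a single $x$-\frableC{}, as required.
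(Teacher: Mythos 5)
Up to the step you yourself flag as delicate, your construction is the paper's own proof in different clothing: your digraph on $\Hdiff(x,y)$ with the frustrated arcs is the paper's $\H=(\Hdiff(x,y),\frus(x))$, and your sink-first peeling of its strongly connected components is exactly the paper's non-strict topological ordering $\delta$ (whose level sets are the strongly connected components of $\H$, processed downstream first), justified by the same induction on Lemma~\ref{lemma: adding frus}. So the whole comparison reduces to your ``delicate point'', which the paper does not address at all: its proof silently treats a strongly connected component of $\H$ as if all its nodes lay on one cycle, i.e.\ on one closed walk without repeated arcs.

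You are right that this is a genuine gap, but your proposed repair cannot succeed, because the missing claim is false for monotone \Ban{s}; in fact Proposition~\ref{prop:romp} itself fails under the paper's edge-simple definition of a cycle. Take $\N$ with $\uf_0(x)=\neg x_1\wedge\neg x_2$, $\uf_1(x)=\neg x_3$, $\uf_2(x)=\neg x_3$, $\uf_3(x)=\neg x_0$, and the synchronous transition $x\pll y$ with $x=0000$, $y=1111$. All five arcs $(1,0),(2,0),(3,1),(3,2),(0,3)$ are negative and frustrated in $x$, all four automata are unstable, so $\H$ is strongly connected; yet no closed walk without repeated arcs contains both $1$ and $2$, since it would have to enter node $0$ twice while $0$ has the single outgoing arc $(0,3)$. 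Monotony does not exclude this pattern: node $0$ is unstable in $x$ with both incoming arcs frustrated --- precisely the ``two disjoint entries into a common node'' configuration you hoped to rule out, realised here by a plain conjunction. Worse, the conclusion of the proposition fails outright for this transition: switching $1$ or $2$ strictly before $0$ makes $0$ stable and it can never be switched afterwards, switching $3$ strictly before $1$ and $2$ does the same to them, and switching $0$ strictly before $3$ does the same to $3$; these constraints force all four automata into a single block, so the only derivation of the required form is the one-step $x\trans y$, whose block $\set{0,1,2,3}$ lies in no $x$-\frableC{}. (One can check, further, that no configuration has a transition of size $<4$ into $y$, so $x\nonseq y$ is even a \normal{} transition although $\Hdiff(x,y)$ is not contained in one \frableC{}, contradicting the ``crucial consequence'' stated after the proposition.) What your argument and the paper's both do prove is the statement obtained by weakening ``the same $x$-\frableC{}'' to ``the same strongly connected subgraph of $(\U(x),\frus(x))$'' --- equivalently, allowing the closed walk to repeat arcs, a weakening under which the parity argument of Proposition~\ref{prop:critical cycles} still goes through; under the paper's literal definition of cycle, the gap you identified cannot be filled.
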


A crucial consequence of Proposition~\ref{prop:romp} is that any \sy{}
transition $x\pll y$ is \seqable{} as long as the automata in $\Hdiff(x,y)$ do
not all belong the same $x$-\fredC{}. And it is \emph{totally} \seqable{}
($x\seqRT y$) if no subset of $\Hdiff(x,y)$ induces a $x$-\fredC{}.  Generally,
Proposition~\ref{prop:romp} implies that for a \Ban{} with no \frableC{s} of
size $m\in\naturals$ or less, any synchronous change of $m'\leq m$ automata
states can be totally sequentialised.  

\begin{proof} 
  Consider the digraph $\H=(\Hdiff(x,y), \frus(x))$ and let
  $\delta:\Hdiff(x,y)\to \set{0,1,\ldots, m-1}$ be a topological ordering of the
  nodes of $\H$: $\forall j,i\in \Hdiff(x,y),\ (j,i)\in\frus(x) \Rar
  \delta(i)\leq \delta(j)$ s.t. if $j$ and $i$ do not belong to the same
  cycle in $\H$ (and thus do not belong to the same $x$-\fredC{} of $\G$), then
  $(j,i)\in\frus(x) \Rar \delta(i)< \delta(j)$.  Now, let
  $\Hdiff_t=\set{i\in\Hdiff(x,y),\ \delta(i)=t}$ and $x(0)=x$. Based on
  Lemma~\ref{lemma: adding frus}, an induction on $t<m$ proves that $\forall
  t<m,\ x(t)\trans x(t+1)=\bar{x(t)}{\Hdiff_t}$ is a transition of $\N$.
  \qed
\end{proof}

The next lemma considers the case where the only \frableC{s} of $\N$ are
Hamiltonian cycles.
\begin{lemma} Let $\N$ be a \Ban{} whose \frableC{s} all have node set
  $\V$. Then, either $\N$ has a unique transition $x\nonseq y$, or it has two
  $x\nonseq y$ and $y\nonseq x$. In the first case, every $i\in \nU(y)$ bears a
  positive loop $(i,i)\in\A$.  In both cases the endpoints of these transitions
  can be reached by no \as{} derivation.
\label{lemma: V}
\end{lemma}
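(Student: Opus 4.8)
=== PROOF PROPOSAL ===

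\begin{proof}[Proof plan]
The plan is to exploit the very rigid structure imposed by the hypothesis that every \frableC{} of $\N$ is Hamiltonian, i.e.\ has node set $\V$. First I would characterise when a \normal{} (non-\seqable{}) transition $x\nonseq y$ can exist at all. By Proposition~\ref{prop:romp}, a synchronous transition fails to be sequentialisable only if the switched automata $\Hdiff(x,y)$ all belong to a common $x$-\fredC{}; under our hypothesis such a cycle must span all of $\V$, forcing $\Hdiff(x,y)=\V$, hence $y=\bar{x}{\V}$ and $\U(x)=\V$ (every automaton is unstable and every arc of the cycle is frustrated in $x$). So a \normal{} transition must flip \emph{every} coordinate simultaneously, and its source $x$ must be a configuration in which the whole Hamiltonian critical cycle is $x$-critical.

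Next I would count how many such $x$ can occur. The key observation is that the property ``$\C$ is $x$-critical'' for a fixed Hamiltonian cycle $\C$ pins down $x$ almost completely: along an isolated-like critical cycle each arc's frustration status is determined by the relation $\sbool(x_j)\sbool(x_i)=-\sign(j,i)$ from \eqref{Eq: def frus}, so once one coordinate is fixed the rest propagate around the cycle. This yields at most two candidate source configurations, and they are related by global complementation. So I would show that the set of sources of \normal{} transitions is either a single $x$ with $y=\bar{x}{\V}$, or a complementary pair $\{x,\bar{x}{\V}\}$, giving exactly the two advertised cases $x\nonseq y$ alone, or both $x\nonseq y$ and $y\nonseq x$ (here $y=\bar{x}{\V}$, so the second transition is $\bar{x}{\V}\nonseq x$). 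The parity/sign constraint of Proposition~\ref{prop:critical cycles} governs which case arises: whether $\bar{x}{\V}$ is itself a critical source depends on the cycle's sign and length.

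For the first case I must prove that every $i\in\nU(y)$ carries a positive loop. Since the unique source is $x$ and $y=\bar{x}{\V}$, if $y$ admits no \normal{} transition then $\U(y)\neq\V$, so some $i$ is stable in $y$. The idea is to apply Lemma~\ref{lem:loop}: I would argue that such a stable-in-$y$ automaton is also appropriately stable in $\bar{y}{i}=\bar{x}{\V\setminus\{i\}}$, or more directly relate $i$'s status in $x$ and in $y=\bar{x}{i}$ (up to the other flips) to invoke the implication $i\in\nU(x)\cap\nU(\bar{x}{i})\Rightarrow\sign(i,i)=+1$. The monotony-based Lemma~\ref{lemma: adding frus} should let me control how the frustration of $i$'s incoming arcs changes between $x$ and $y$ so as to land in the hypothesis of Lemma~\ref{lem:loop}.

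Finally, for the unreachability claim, I would show the endpoints are not targets of any atomic transition, i.e.\ have no incoming arc in \SIG. An atomic transition into $y$ would come from some $\bar{y}{i}$ with $i\in\U(\bar{y}{i})$ and $y=\bar{(\bar{y}{i})}{i}$; I would derive a contradiction by showing that reaching such an endpoint asynchronously would require, at the penultimate configuration, an instability pattern incompatible with the Hamiltonian-critical-cycle structure (essentially because the only way all coordinates become ``correct'' is the simultaneous flip, any single-coordinate flip leaving at least one automaton still unstable by Lemma~\ref{lemma: adding frus}). I expect the main obstacle to be the unreachability argument: it requires ruling out \emph{all} atomic in-transitions, and showing that no asynchronous predecessor exists demands a careful instability bookkeeping around the cycle rather than a one-line application of the preliminary lemmas. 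The loop-sign step for the first case is the second most delicate point, since it hinges on correctly matching the configurations to which Lemma~\ref{lem:loop} is applied.
\end{proof}
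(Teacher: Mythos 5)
Your skeleton (normal transitions must flip all of $\V$ from a source with $\U(x)=\V$; then count them; then prove the loop and unreachability claims) matches the paper's, but two of your three steps rest on arguments that do not go through as sketched. First, the counting step. Your frustration-propagation argument pins down at most two complementary sources \emph{per Hamiltonian critical cycle}, but the hypothesis allows $\N$ to possess several distinct Hamiltonian critical cycles, and nothing in your sketch prevents two different cycles from being critical at two different complementary pairs, giving up to four candidate sources of \normal{} transitions. The paper closes exactly this hole with a different argument: if $x\nonseq y$ and $x'\nonseq y'$ are \normal{} with $x'\notin\set{x,y}$, then since $\U(x)=\U(x')=\V$ and $y=\bar{x}{\V}$, both $x\trans x'$ and $x'\trans y$ are transitions of $\N$ (their difference sets $\Hdiff(x,x')$ and $\Hdiff(x',y)=\V\setminus\Hdiff(x,x')$ being proper nonempty subsets of $\V$), so $x\trans x'\trans y$ decomposes $x\nonseq y$ into smaller transitions, contradicting normality. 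Some argument of this kind is indispensable; the per-cycle propagation count cannot by itself deliver the ``at most one pair'' conclusion. (Relatedly, your claim that which case arises ``depends on the cycle's sign and length'' is off: Proposition~\ref{prop:critical cycles} constrains sign and length identically in both cases; what distinguishes them is whether $\U(y)=\V$ and whether $y\pll x$ is itself \normal{}.)

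Second, and more importantly, you are missing the single observation that makes both remaining claims immediate --- and you flag precisely those claims as your main obstacles. For a \normal{} $x\nonseq y=\bar{x}{\V}$ and any $i\in\V$, the couple $x\trans \bar{y}{i}$ is a transition of $\N$ (its difference set $\V\setminus\set{i}$ is contained in $\U(x)=\V$), and it is \emph{totally} \seqable{} by Proposition~\ref{prop:romp}, because an $x$-\fredC{} inside $\V\setminus\set{i}$ would contradict the hypothesis that all \frableC{s} have node set $\V$. Hence $x\seqRT\bar{y}{i}$. If some $i$ had $i\in\U(\bar{y}{i})$, then appending the atomic step $\bar{y}{i}\seq y$ would sequentialise $x\nonseq y$, a contradiction; therefore $i\in\nU(\bar{y}{i})$ for every $i\in\V$. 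This at once (a) shows $y$ has no incoming atomic transition, hence is the endpoint of no \as{} derivation --- no ``instability bookkeeping around the cycle'' is needed --- and (b) combined with Lemma~\ref{lem:loop} applied at $y$ (namely $i\in\nU(y)\cap\nU(\bar{y}{i})\Rar\sign(i,i)=+1$) yields the positive loops. Note also that your intended use of Lemma~\ref{lem:loop} misidentifies the configurations: you write $y=\bar{x}{i}$, but $y=\bar{x}{\V}$; the lemma must be invoked at $y$ and $\bar{y}{i}$, not at $x$, and Lemma~\ref{lemma: adding frus} plays no role here.
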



\begin{proof}  Suppose that $x\nonseq y$ and $x'\nonseq y'$ are two \normal{}
  transitions. Using Proposition~\ref{prop:romp}, if $x'\neq y$, then
  $\W=\Hdiff(x,x')\subsetneq\U(x)=\V$ and $\Hdiff(x',y)=\V\setminus \W\subsetneq
  \U(x')=\V$. In this case, $x\trans x'\trans y$ is a derivation of $\N$
  involving smaller transitions than $x\trans y$, in contradiction with $x\trans
  y$ being \normal{}. Thus, if $x\nonseq y$ is not the unique \normal{}
  transition of $\N$, then the only other one is $y\nonseq x$.  For any
  \normal{} transition $z\nonseq z'=\bar{z}{\V}$, and $\forall i\in\V$, $z\trans
  \bar{z'}{i}$ is a transition of $\N$. By hypothesis and by
  Proposition~\ref{prop:romp}, it is \seqable{}: $z\seqRT \bar{z'}{i}$. Since
  $z\nonseq z'$ is not however, this implies $i\in\nU(\bar{z'}{i}),\, \forall
  i\in\V$. Thus, the endpoint of any \normal{} transition of $\N$ can be reached
  by no \as{} derivation. And since $\forall i\in \V,\ i\in\nU(\bar{y}{i})$, any
  $ i\in\nU(y)$ is such that $\sign(i,i)=+1$ by Lemma~\ref{lem:loop}.\qed
\end{proof}


\def\destroysLTF{ $\forall x\in\Bn,
\ \left \lbrace \begin{array}{lcl}
    \uh_0(x)&=&x_2\vee(x_0\wedge\neg x_1)\\\uh_1(x)&=&x_3\vee(\neg x_0\wedge
    x_1)\\\uh_2(x)&=&\neg x_0\wedge x_1\\ \uh_3(x) &=& x_0\wedge\neg
    x_1\end{array} \right.$}


\def\destroysSIG{ 
  \begin{tikzpicture}[trans/.style ={->,>=angle
        60,out=5,in=180},seq/.style={-open triangle 60,out=5,in=180}]
    \node  at (1,-0.2) (x4) {$0011$};   
    \node  at (3,1)   (x2a) {$0010$};   
    \node  at (3,-1)  (x2b) {$0001$};
    \node[ellipse,fill=black!10,]  at (5.9,-1.1) {\parbox{2.9cm}
      {\centerline{~}\vspace{0.2cm}{\footnotesize\centerline{\mbox{\textcolor{black!50}{\scc}}}}}};  
    \node  at (5.9,-1) (cfc)  {$\{x\in\B^4\,|\, x_0\vee  x_1=1\}$};
    \node[rectangle,fill=black!30]  at (7.5,0.5) {\parbox{2.8cm}
      {\centerline{~}\vspace{0.2cm}{\footnotesize\centerline{\mbox{\textcolor{black!60}{stable
                configuration}}}}}};   
    \node  at (7.5,0.7) (fp)  {$0000$}; 
    \path  (x4) edge[seq] (x2a) edge[seq] (x2b);
    \path  (x4) edge[seq,out=70,in=120]  (fp); 
    \path  (x4) edge[seq,out=-50,in=-170]   (5.2,-1.42); 
    \path (x2a) edge[seq,in=100] (cfc) edge[seq,in=160]  (fp); 
    \path (x2b) edge[seq,out=30,in=150] (cfc)
    edge[seq,out=30,in=-170] (fp); 
    %
    \node at   (7.8,0.1) (fpbis) {};            
    \path 
    (7.7,-1.1) edge[-,line width=1.5mm,out=10,in=-40] (fpbis) ; 
    \draw[-triangle 60,line width=1.5pt] (fpbis) -- +(.29,0); 
\end{tikzpicture}}


\def\destroysGT{ 
  \begin{tikzpicture}[seq/.style={-open triangle 60}]
    \node  at (0,-0.4)      (x11) {$0011$};   
    \node  at (-1.5,0.5) (x10) {$0010$};   
    \node  at (1.5,0.6)  (x01) {$0001$};
    \node[ellipse,fill=black!10,]  at (-4,-0.17) {\parbox{2.9cm}
      {\centerline{~}\vspace{0.2cm}{\footnotesize\centerline{\mbox{\textcolor{black!50}{\scc}}}}}};  
    \node  at (-4,0) (cfc)  {$\{x\in\B^4\,|\, x_0\vee  x_1=1\}$};
    \node[rectangle,fill=black!30]  at (3.5,-0.2) {\parbox{2.8cm}
      {\centerline{~}\vspace{0.2cm}{\footnotesize\centerline{\mbox{\textcolor{black!60}{stable
                configuration}}}}}};   
    \node  at (3.5,0) (fp)  {$0000$}; 
    \path  (x11) edge[seq,out=100,in=0]   (x10);
    \path  (x11) edge[seq,out=80,in=180]  (x01);
    \path  (x11) edge[seq,out=-10,in=-160] (2.3,-0.1);  
    \path  (x11) edge[seq,out=190,in=-20]  (-2.3,-0.1); 
    \path  (x10) edge[seq,out=180,in=20] (cfc) edge[seq,out=30,in=140]  (3.7,0.2); 
    \path  (x01) edge[seq,out=155,in=40] (cfc) edge[seq,out=0,in=140] (fp); 
    \node at   (2,-0.6) (fpbis) {};   
    \node at   (-3.1,-0.4) (cfcbis) {};            
    \path
    (cfcbis) edge[-,line width=1.5mm,out=-20,in=-160] (fpbis) ; 
    \draw[-triangle 60,line width=1.5pt] (2,-0.6) -- +(0.1,0.04); 
\end{tikzpicture}}



\def\freeLTF{$\forall x\in\Bn,\,\left \lbrace \begin{array}{lcl} \uf_0(x)&=& x_0\wedge
    \neg x_1\\ \uf_1(x)&=& \neg x_0\wedge x_1\end{array} \right. $}


\def\freeSIG{
  \begin{tikzpicture}[trans/.style ={->,>=angle
        60,out=5,in=180},seq/.style={-open triangle 60}]
    \node[rectangle,fill=black!30]  at (0,1.5){\parbox{1cm}{\centerline{~}\vspace{0.2cm}}};
    \node[rectangle,fill=black!30]  at (2,0){\parbox{1cm}{\centerline{~}\vspace{0.2cm}}};
    \node[rectangle,fill=black!30]  at (2,1.5){\parbox{1cm}{\centerline{~}\vspace{0.2cm}}};
    \node  at (0,0) (x11) {$11$};   
    \node  at (0,1.5) (x01) {$01$};   
    \node  at (2,0) (x10) {$10$};
    \node  at (2,1.5) (x00) {$00$};
    \path  (x11) edge[seq] (x01) ;
    \path  (x11) edge[seq] (x10); 
    \node at   (1.7,1.45) (x00bis) {};            
    \path (x11) edge[-,line width=1.5mm] (x00bis) ; 
    \draw[-triangle 60,line width=1.5pt] (x00bis) -- +(-.03,-0.025); 
\end{tikzpicture}}


\def
  \begin{example}[\freeS{} to synchronism]
    \begin{tabular}{m{0.6\textwidth}m{0.6\textwidth}} 
      \freeLTF &  \scalebox{0.5}{\begin{picture}(0,0)%
\includegraphics{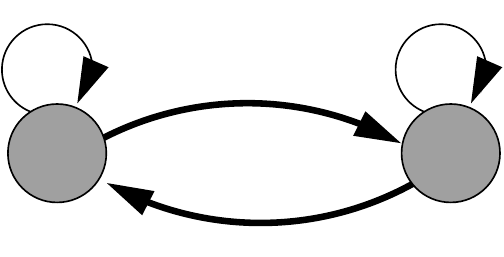}%
\end{picture}%
\setlength{\unitlength}{4144sp}%
\begingroup\makeatletter\ifx\SetFigFont\undefined%
\gdef\SetFigFont#1#2#3#4#5{%
  \reset@font\fontsize{#1}{#2pt}%
  \fontfamily{#3}\fontseries{#4}\fontshape{#5}%
  \selectfont}%
\fi\endgroup%
\begin{picture}(2295,1173)(5274,590)
\put(7291,974){\makebox(0,0)[lb]{\smash{{\SetFigFont{14}{16.8}{\rmdefault}{\mddefault}{\updefault}{\color[rgb]{0,0,0}$1$}%
}}}}
\put(7111,1604){\makebox(0,0)[lb]{\smash{{\SetFigFont{12}{14.4}{\rmdefault}{\mddefault}{\updefault}{\colorbox{white}{\color[rgb]{0,0,0}$+$}}%
}}}}
\put(6346,659){\makebox(0,0)[lb]{\smash{{\SetFigFont{12}{14.4}{\rmdefault}{\mddefault}{\updefault}{\colorbox{white}{\color[rgb]{0,0,0}$-$}}%
}}}}
\put(6301,1199){\makebox(0,0)[lb]{\smash{{\SetFigFont{12}{14.4}{\rmdefault}{\mddefault}{\updefault}{\colorbox{white}{\color[rgb]{0,0,0}$-$}}%
}}}}
\put(5491,974){\makebox(0,0)[lb]{\smash{{\SetFigFont{14}{16.8}{\rmdefault}{\mddefault}{\updefault}{\color[rgb]{0,0,0}$0$}%
}}}}
\put(5311,1604){\makebox(0,0)[lb]{\smash{{\SetFigFont{12}{14.4}{\rmdefault}{\mddefault}{\updefault}{\colorbox{white}{\color[rgb]{0,0,0}$+$}}%
}}}}
\end{picture}%
}
    \end{tabular}
    
    \begin{figure}
      \centerline{\freeSIG}
      \caption{}
      \label{SIG exfree}
    \end{figure}
    \label{Ex free}
  \end{example}
{
  \begin{example}[\freeS{} to synchronism]
    \begin{tabular}{m{0.6\textwidth}m{0.6\textwidth}} 
      \freeLTF &  \scalebox{0.5}{\begin{picture}(0,0)%
\includegraphics{exfree.pdf}%
\end{picture}%
\setlength{\unitlength}{4144sp}%
\begingroup\makeatletter\ifx\SetFigFont\undefined%
\gdef\SetFigFont#1#2#3#4#5{%
  \reset@font\fontsize{#1}{#2pt}%
  \fontfamily{#3}\fontseries{#4}\fontshape{#5}%
  \selectfont}%
\fi\endgroup%
\begin{picture}(2295,1173)(5274,590)
\put(7291,974){\makebox(0,0)[lb]{\smash{{\SetFigFont{14}{16.8}{\rmdefault}{\mddefault}{\updefault}{\color[rgb]{0,0,0}$1$}%
}}}}
\put(7111,1604){\makebox(0,0)[lb]{\smash{{\SetFigFont{12}{14.4}{\rmdefault}{\mddefault}{\updefault}{\colorbox{white}{\color[rgb]{0,0,0}$+$}}%
}}}}
\put(6346,659){\makebox(0,0)[lb]{\smash{{\SetFigFont{12}{14.4}{\rmdefault}{\mddefault}{\updefault}{\colorbox{white}{\color[rgb]{0,0,0}$-$}}%
}}}}
\put(6301,1199){\makebox(0,0)[lb]{\smash{{\SetFigFont{12}{14.4}{\rmdefault}{\mddefault}{\updefault}{\colorbox{white}{\color[rgb]{0,0,0}$-$}}%
}}}}
\put(5491,974){\makebox(0,0)[lb]{\smash{{\SetFigFont{14}{16.8}{\rmdefault}{\mddefault}{\updefault}{\color[rgb]{0,0,0}$0$}%
}}}}
\put(5311,1604){\makebox(0,0)[lb]{\smash{{\SetFigFont{12}{14.4}{\rmdefault}{\mddefault}{\updefault}{\colorbox{white}{\color[rgb]{0,0,0}$+$}}%
}}}}
\end{picture}%
}
    \end{tabular}
    
    \begin{figure}
      \centerline{\freeSIG}
      \caption{}
      \label{SIG exfree}
    \end{figure}
    \label{Ex free}
  \end{example}
}



\def\growsLTF{$$\forall x\in\Bn,\,\left \lbrace \begin{array}{lcl} \uf_0(x)&=&
        \uh_0(x)\vee (x_4\wedge \neg x_1\wedge\neg x_3)\\ \uf_i(x)&=&
        \uh_i(x),\ \forall i\in \set{1,2,3}\\ \uf_4(x)&=& \neg(x_0\vee x_1\vee
        x_2\vee x_3)\wedge\neg x_4\end{array} \right. $$}


\def\growsSIG{
  \begin{tikzpicture}[trans/.style ={->,>=angle
        60,out=5,in=180},seq/.style={-open triangle
        60,out=-5,in=180,color=black!40},seqq/.style={-open triangle
        60,out=-5,in=180}]
    \node[color=black!40] at (0.5,-0.2) (x4) {$0011\mathbf{0}$};
    \node[color=black!40] at (2.5,-1.2) (x2a) {$0010\,\mathbf{0}$};
    \node[color=black!40] at (2.5,0.8) (x2b) {$0001\,\mathbf{0}$};
    \node[ellipse,fill=black!10]  at (5.9,0.5) {\parbox{3.2cm}
      {\centerline{~}\vspace{0.3cm}{\footnotesize\centerline{\mbox{\textcolor{black!50}{\scc}}}}}};  
    \node  at (5.9,0.5) (cfc)  {$\{x\in\B^5\,|\, x_0\vee  x_1=1\wedge x_4=\mathbf{0}\}$};
    \node[rectangle,fill=black!30]  at (7.5,-2.2) {\parbox{2.8cm}
      {\centerline{~}\vspace{2.8cm}{\footnotesize\centerline{\mbox{\textcolor{black!60}{~}}}}}};   
    \node  at (7.5,-0.8) (fp)  {$0000\,\mathbf{0}$}; 
    \path  (x4) edge[seq] (x2a) edge[seq] (x2b);
    \path  (x4) edge[seq,out=-65,in=-145]  (fp); 
    \path  (x4) edge[seq,out=60,in=130]   (5.2,0.8); 
    \path (x2a) edge[seq,in=-100] (5.4,0.15) edge[seq,in=-160]  (fp); 
    \path (x2b) edge[seq,out=-40,in=-150] (5.2,0.2)
    edge[seq,out=-40] (fp); 
    \node at   (8.4,-0.7) (fpbis) {};            
    \path (7.8,0.2) edge[-,line width=1.5mm,out=-10,in=30] (8.3,-0.75) ; 
    \draw[-triangle 60,line width=1.5pt] (fpbis) -- +(-.2,-0.2); 
    \node[color=black!40]  at (0.5,-4.2) (x4B) {$0011\mathbf{1}$};   
    \node[color=black!40]   at (2.5,-3)   (x2aB) {$0010\,\mathbf{1}$};   
    \node[color=black!40]  at (2.5,-5)  (x2bB) {$0001\,\mathbf{1}$};
    \node[ellipse,fill=black!10,]  at (5.9,-5)  {\parbox{3.2cm}
      {\centerline{~}\vspace{0.3cm}{\footnotesize\centerline{\mbox{\textcolor{black!50}{\scc}}}}}};  
    \node  at (5.9,-5) (cfcB)  {$\{x\in\B^5\,|\, x_0\vee  x_1=1\wedge x_4=\mathbf{1}\}$};
    \node  at (7.5,-3.5) (fpB)  {$0000\,\mathbf{1}$}; 
    \path  (x4B) edge[seq] (x2aB) edge[seq] (x2bB);
    \path  (x4B) edge[seq,out=67,in=150]  (fpB); 
    \path  (x4B) edge[seq,out=-50,in=-170]   (5.2,-5.42); 
    \path (x2aB) edge[seq,in=120] (cfcB) edge[seq,in=170]  (fpB); 
    \path (x2bB) edge[seq,out=30,in=150] (cfcB)
    edge[seq,out=30,in=-170] (fpB); 
    \node at   (8.1,-3.7) (fpbisB) {};            
    \path (7.5,-4.7) edge[-,line width=1.5mm,out=20,in=-50] (fpbisB) ; 
    \draw[-triangle 60,line width=1.5pt] (fpbisB) -- +(.3,-0.06); 
    \path  (fp) edge[seqq,out=-70,in=75] (fpB);
    \path  (fpB) edge[seqq,out=120,in=-110] (fp);
    \draw[seqq]   (fpB) to[,out=5,in=80] (8.9,-4.65)  to[out=-95,in=-40]  (7.2,-5.25); 
\end{tikzpicture}}


\def\growsGT{
  \begin{tikzpicture}[seq/.style={-open triangle 60,color=black!40},
  seqq/.style={-open triangle 60}]
    \node[color=black!40]  at (0,-0.4)   (x11) {$0011\mathbf{0}$};
    \node[color=black!40]  at (-1.5,0.5) (x10) {$0010\,\mathbf{0}$};
    \node[color=black!40]  at (1.5,0.5)  (x01) {$0001\,\mathbf{0}$};
    \node[ellipse,fill=black!10,]  at (-4,0) {\parbox{3.3cm}
      {\centerline{~}\vspace{0.4cm}{\footnotesize\centerline{\mbox{\textcolor{black!50}{\scc}}}}}};  
    \node  at (-4,0) (cfc)  {$\{x\in\B^5\,|\, x_0\vee  x_1=1\wedge x_4=\mathbf{0}\}$};
    \node[rectangle,fill=black!20]  at (3.5,-1.25) {\parbox{2.3cm}
      {\centerline{~}\vspace{2.3cm}{\footnotesize\centerline{\mbox{\textcolor{black!60}{~}}}}}};   
    \node  at (3.5,0) (fp)  {$0000\,\mathbf{0}$}; 
    \path  (x11) edge[seq,out=100,in=0]   (x10);
    \path  (x11) edge[seq,out=80,in=180]  (x01);
    \path  (x11) edge[seq,out=-10,in=180] (fp);  
    \path  (x11) edge[seq,out=190,in=-20]  (-2.3,-0.2); 
    \path  (x10) edge[seq,out=180,in=20] (cfc) edge[seq,out=30,in=140]  (3.7,0.2); 
    \path  (x01) edge[seq,out=155,in=40] (cfc) edge[seq,out=0,in=140] (fp); 
    \node at   (2.9,-0.3) (fpbis) {};   
    \node at   (-3.1,-0.4) (cfcbis) {};            
    \path (cfcbis) edge[-,line width=1.5mm,out=-20,in=-160] (fpbis) ; 
    \draw[-triangle 60,line width=1.5pt] (2.9,-0.3) -- +(0.1,0.04); 
    %
    %
    %
    %
    \node[color=black!40]  at (0,-2.9)   (x11B) {$0011\mathbf{1}$};
    \node[color=black!40]  at (-1.5,-2) (x10B) {$0010\,\mathbf{1}$};
    \node[color=black!40]  at (1.5,-2)  (x01B) {$0001\,\mathbf{1}$};
    \node[ellipse,fill=black!10,]  at (-4,-2.5) {\parbox{3.3cm}
      {\centerline{~}\vspace{0.4cm}{\footnotesize\centerline{\mbox{\textcolor{black!50}{\scc}}}}}};  
    \node  at (-4,-2.5) (cfcB)  {$\{x\in\B^5\,|\, x_0\vee  x_1=1\wedge x_4=\mathbf{1}\}$};
    \node  at (3.5,-2.5) (fpB)  {$0000\,\mathbf{1}$}; 
    \path  (x11B) edge[seq,out=100,in=0]   (x10B);
    \path  (x11B) edge[seq,out=80,in=180]  (x01B);
    \path  (x11B) edge[seq,out=-10,in=180] (fpB);  
    \path  (x11B) edge[seq,out=190,in=-20]  (-2.3,-2.7); 
    \path  (x10B) edge[seq,out=180,in=20] (cfcB) edge[seq,out=30,in=140]  (3.5,-2); 
    \path  (x01B) edge[seq,out=155,in=40] (cfcB) edge[seq,out=0,in=140] (fpB); 
    \node at   (2.9,-2.8) (fpbisB) {};   
    \node at   (-3.1,-2.9) (cfcbisB) {};            
    \path (cfcbisB) edge[-,line width=1.5mm,out=-20,in=-160] (fpbisB) ; 
    \draw[-triangle 60,line width=1.5pt] (2.9,-2.8) -- +(0.1,0.04); 
    \path  (fp)  edge[seqq,out=-70,in=75] (fpB);
    \path  (fpB) edge[seqq,out=120,in=-110] (fp);
    \draw[seqq]  (fpB) to[out=-130,in=0] (0,-3.8)  to[out=180,in=-40]
  (-3.5,-2.9); 
\end{tikzpicture}}


\def\exgrows{
  \begin{example}[\growsS{} to synchronism]
    \begin{tabular}{m{0.6\textwidth}m{0.6\textwidth}} 
      \growsLTF &  \scalebox{0.5}{INPUT FIGURE}
    \end{tabular}
    
    If $\uh_0(x)=1$, then $\uf_0(x)=1,\ \forall x\in\B^5$. Hence, if
    $\uh_0(x)\neq \uf_0(x)$, it must be that $\uh_0(x)=0=x_2=(x_0\wedge \neg
    x_1)$ and  $\uf_0(x)=1=(x_4\wedge \neg x_1\wedge\neg x_3)$. Thus,
    $\uh_0(x)\neq \uf_0(x)\ \implies\ x=00001$. 

    \begin{figure}
      \growsSIG
      \caption{
    }
      \label{SIG exgrows}
    \end{figure}    
    \label{Ex grows}
  \end{example}
}



\def\figcontrex{ \begin{figure} 
    \centering \hspace{1cm} \destroysGT \caption{Schematic representation of
    $\SIGbis$, the \sig{} of the \Ban{} of Example \ref{Ex:contrex} augmented
    with \normal{} transition $1100\nonseq 0000$ which updates both automata $0$
    and $1$ simultaneously. The shaded ellipse corresponds a strongly connected
    component which is terminal in $\SIG$ but not in $\SIGbis$ nor the \gig{}
    with the added possibility of $1100\nonseq 0000$.}  \label{SIG
    contrex} \end{figure}}

\def
\begin{example}[\destroys- and \grows-impact] 
  Let $\N=\set{f_i\,;\, i<4}$ be the \Ban{} of size $4$ whose \ltf{s} and signed
  structure are given below:

  {\begin{tabular}{m{0.56\textwidth}m{0.6\textwidth}} \destroysLTF
      & \scalebox{0.5}{\begin{picture}(0,0)%
\includegraphics{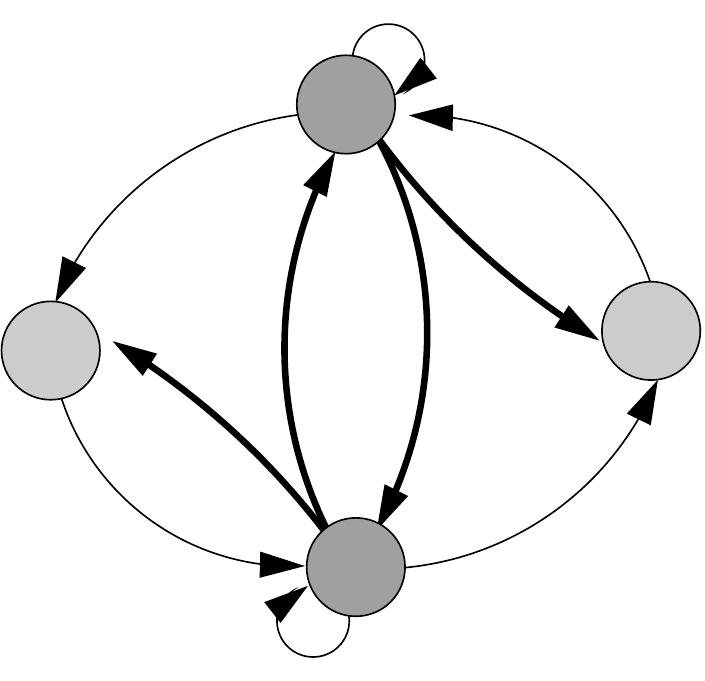}%
\end{picture}%
\setlength{\unitlength}{4144sp}%
\begingroup\makeatletter\ifx\SetFigFont\undefined%
\gdef\SetFigFont#1#2#3#4#5{%
  \reset@font\fontsize{#1}{#2pt}%
  \fontfamily{#3}\fontseries{#4}\fontshape{#5}%
  \selectfont}%
\fi\endgroup%
\begin{picture}(3211,3108)(6428,-1660)
\put(6616,-241){\makebox(0,0)[lb]{\smash{{\SetFigFont{14}{16.8}{\rmdefault}{\mddefault}{\updefault}{\color[rgb]{0,0,0}$2$}%
}}}}
\put(9361,-151){\makebox(0,0)[lb]{\smash{{\SetFigFont{14}{16.8}{\rmdefault}{\mddefault}{\updefault}{\color[rgb]{0,0,0}$3$}%
}}}}
\put(7966,884){\makebox(0,0)[lb]{\smash{{\SetFigFont{14}{16.8}{\rmdefault}{\mddefault}{\updefault}{\color[rgb]{0,0,0}$1$}%
}}}}
\put(8011,-1231){\makebox(0,0)[lb]{\smash{{\SetFigFont{14}{16.8}{\rmdefault}{\mddefault}{\updefault}{\color[rgb]{0,0,0}$0$}%
}}}}
\put(7111,659){\makebox(0,0)[lb]{\smash{{\SetFigFont{12}{14.4}{\rmdefault}{\mddefault}{\updefault}{\colorbox{white}{\color[rgb]{0,0,0}$+$}}%
}}}}
\put(7426,-556){\makebox(0,0)[lb]{\smash{{\SetFigFont{12}{14.4}{\rmdefault}{\mddefault}{\updefault}{\colorbox{white}{\color[rgb]{0,0,0}$-$}}%
}}}}
\put(6931,-916){\makebox(0,0)[lb]{\smash{{\SetFigFont{12}{14.4}{\rmdefault}{\mddefault}{\updefault}{\colorbox{white}{\color[rgb]{0,0,0}$+$}}%
}}}}
\put(7741,-1591){\makebox(0,0)[lb]{\smash{{\SetFigFont{12}{14.4}{\rmdefault}{\mddefault}{\updefault}{\colorbox{white}{\color[rgb]{0,0,0}$+$}}%
}}}}
\put(8956,614){\makebox(0,0)[lb]{\smash{{\SetFigFont{12}{14.4}{\rmdefault}{\mddefault}{\updefault}{\colorbox{white}{\color[rgb]{0,0,0}$+$}}%
}}}}
\put(8911,-916){\makebox(0,0)[lb]{\smash{{\SetFigFont{12}{14.4}{\rmdefault}{\mddefault}{\updefault}{\colorbox{white}{\color[rgb]{0,0,0}$+$}}%
}}}}
\put(8056,1289){\makebox(0,0)[lb]{\smash{{\SetFigFont{12}{14.4}{\rmdefault}{\mddefault}{\updefault}{\colorbox{white}{\color[rgb]{0,0,0}$+$}}%
}}}}
\put(7651,-196){\makebox(0,0)[lb]{\smash{{\SetFigFont{12}{14.4}{\rmdefault}{\mddefault}{\updefault}{\colorbox{white}{\color[rgb]{0,0,0}$-$}}%
}}}}
\put(8281,-196){\makebox(0,0)[lb]{\smash{{\SetFigFont{12}{14.4}{\rmdefault}{\mddefault}{\updefault}{\colorbox{white}{\color[rgb]{0,0,0}$-$}}%
}}}}
\put(8551,209){\makebox(0,0)[lb]{\smash{{\SetFigFont{12}{14.4}{\rmdefault}{\mddefault}{\updefault}{\colorbox{white}{\color[rgb]{0,0,0}$-$}}%
}}}}
\end{picture}%
} \end{tabular}} The \gig{} of
      this \Ban{} has two attractors: one unstable and one stable (configuration
      $0000$).  When $x_0=x_1=1$ and $x_2=x_3=0$, the simultaneous update of
      automata $0$ and $1$ has an effect that cannot be mimicked by a series of
      atomic updates (\cf \figref{SIG contrex}). If it could, strongly connected
      component $\atta{1100}$ would not be terminal in the \sig{}. From
      Propositions \ref{prop:critical cycles} and \ref{prop:romp} we know that
      this is essentially due to the positive cycle of length $2$ induced by
      automata $0$ and $1$.\smallskip
 
  Building on this example, we can derive an example of a \Ban{} with a
  \grows-impact transition. Indeed, consider a fifth automaton $i=4\in\V$ s.t.
  $ \uf_4(x)= \neg(x_0\vee x_1\vee x_2\vee x_3)\wedge\neg x_4$ and let
  $\uf_0(x)= \uh_0(x)\vee (x_4\wedge \neg x_1\wedge\neg x_3)$ and $\forall i\in
  \set{1,2,3}$, let $\uf_i(x)= \uh_i(x)$.  It can be checked that $\uh_0(x)\neq
  \uf_0(x)\ \implies\ x=00001$ and as a consequence, adding the same \normal{}
  transition as before to the \sig{} of this new \Ban{} yields the \tg{} below:

     \centerline{ \growsGT}
  \label{Ex:contrex}
\end{example}
{
\begin{example}[\destroys- and \grows-impact] 
  Let $\N=\set{f_i\,;\, i<4}$ be the \Ban{} of size $4$ whose \ltf{s} and signed
  structure are given below:

  {\begin{tabular}{m{0.56\textwidth}m{0.6\textwidth}} \destroysLTF
      & \scalebox{0.5}{\begin{picture}(0,0)%
\includegraphics{ContrexPos2.pdf}%
\end{picture}%
\setlength{\unitlength}{4144sp}%
\begingroup\makeatletter\ifx\SetFigFont\undefined%
\gdef\SetFigFont#1#2#3#4#5{%
  \reset@font\fontsize{#1}{#2pt}%
  \fontfamily{#3}\fontseries{#4}\fontshape{#5}%
  \selectfont}%
\fi\endgroup%
\begin{picture}(3211,3108)(6428,-1660)
\put(6616,-241){\makebox(0,0)[lb]{\smash{{\SetFigFont{14}{16.8}{\rmdefault}{\mddefault}{\updefault}{\color[rgb]{0,0,0}$2$}%
}}}}
\put(9361,-151){\makebox(0,0)[lb]{\smash{{\SetFigFont{14}{16.8}{\rmdefault}{\mddefault}{\updefault}{\color[rgb]{0,0,0}$3$}%
}}}}
\put(7966,884){\makebox(0,0)[lb]{\smash{{\SetFigFont{14}{16.8}{\rmdefault}{\mddefault}{\updefault}{\color[rgb]{0,0,0}$1$}%
}}}}
\put(8011,-1231){\makebox(0,0)[lb]{\smash{{\SetFigFont{14}{16.8}{\rmdefault}{\mddefault}{\updefault}{\color[rgb]{0,0,0}$0$}%
}}}}
\put(7111,659){\makebox(0,0)[lb]{\smash{{\SetFigFont{12}{14.4}{\rmdefault}{\mddefault}{\updefault}{\colorbox{white}{\color[rgb]{0,0,0}$+$}}%
}}}}
\put(7426,-556){\makebox(0,0)[lb]{\smash{{\SetFigFont{12}{14.4}{\rmdefault}{\mddefault}{\updefault}{\colorbox{white}{\color[rgb]{0,0,0}$-$}}%
}}}}
\put(6931,-916){\makebox(0,0)[lb]{\smash{{\SetFigFont{12}{14.4}{\rmdefault}{\mddefault}{\updefault}{\colorbox{white}{\color[rgb]{0,0,0}$+$}}%
}}}}
\put(7741,-1591){\makebox(0,0)[lb]{\smash{{\SetFigFont{12}{14.4}{\rmdefault}{\mddefault}{\updefault}{\colorbox{white}{\color[rgb]{0,0,0}$+$}}%
}}}}
\put(8956,614){\makebox(0,0)[lb]{\smash{{\SetFigFont{12}{14.4}{\rmdefault}{\mddefault}{\updefault}{\colorbox{white}{\color[rgb]{0,0,0}$+$}}%
}}}}
\put(8911,-916){\makebox(0,0)[lb]{\smash{{\SetFigFont{12}{14.4}{\rmdefault}{\mddefault}{\updefault}{\colorbox{white}{\color[rgb]{0,0,0}$+$}}%
}}}}
\put(8056,1289){\makebox(0,0)[lb]{\smash{{\SetFigFont{12}{14.4}{\rmdefault}{\mddefault}{\updefault}{\colorbox{white}{\color[rgb]{0,0,0}$+$}}%
}}}}
\put(7651,-196){\makebox(0,0)[lb]{\smash{{\SetFigFont{12}{14.4}{\rmdefault}{\mddefault}{\updefault}{\colorbox{white}{\color[rgb]{0,0,0}$-$}}%
}}}}
\put(8281,-196){\makebox(0,0)[lb]{\smash{{\SetFigFont{12}{14.4}{\rmdefault}{\mddefault}{\updefault}{\colorbox{white}{\color[rgb]{0,0,0}$-$}}%
}}}}
\put(8551,209){\makebox(0,0)[lb]{\smash{{\SetFigFont{12}{14.4}{\rmdefault}{\mddefault}{\updefault}{\colorbox{white}{\color[rgb]{0,0,0}$-$}}%
}}}}
\end{picture}%
} \end{tabular}} The \gig{} of
      this \Ban{} has two attractors: one unstable and one stable (configuration
      $0000$).  When $x_0=x_1=1$ and $x_2=x_3=0$, the simultaneous update of
      automata $0$ and $1$ has an effect that cannot be mimicked by a series of
      atomic updates (\cf \figref{SIG contrex}). If it could, strongly connected
      component $\atta{1100}$ would not be terminal in the \sig{}. From
      Propositions \ref{prop:critical cycles} and \ref{prop:romp} we know that
      this is essentially due to the positive cycle of length $2$ induced by
      automata $0$ and $1$.\smallskip
 
  Building on this example, we can derive an example of a \Ban{} with a
  \grows-impact transition. Indeed, consider a fifth automaton $i=4\in\V$ s.t.
  $ \uf_4(x)= \neg(x_0\vee x_1\vee x_2\vee x_3)\wedge\neg x_4$ and let
  $\uf_0(x)= \uh_0(x)\vee (x_4\wedge \neg x_1\wedge\neg x_3)$ and $\forall i\in
  \set{1,2,3}$, let $\uf_i(x)= \uh_i(x)$.  It can be checked that $\uh_0(x)\neq
  \uf_0(x)\ \implies\ x=00001$ and as a consequence, adding the same \normal{}
  transition as before to the \sig{} of this new \Ban{} yields the \tg{} below:

     \centerline{ \growsGT}
  \label{Ex:contrex}
\end{example}
}



\def\nonmonSIG{
  \begin{tikzpicture}[trans/.style ={->,>=angle
        60,out=5,in=180},seq/.style={-open triangle 60}]
    \node[rectangle,fill=black!30]  at (0,1.5){\parbox{1cm}{\centerline{~}\vspace{0.2cm}}};
    \node[rectangle,fill=black!30]  at (2,0){\parbox{1cm}{\centerline{~}\vspace{0.2cm}}};
    \node[rectangle,fill=black!30]  at (2,1.5){\parbox{1cm}{\centerline{~}\vspace{0.2cm}}};
    \node  at (0,0) (x11) {$11$};   
    \node  at (0,1.5) (x01) {$01$};   
    \node  at (2,0) (x10) {$10$};
    \node  at (2,1.5) (x00) {$00$};
    \path  (x11) edge[seq,out=105,in=-105] (x01) ;
    \path  (x01) edge[seq, out=-75,in=75] (x11) ;
    \path  (x11) edge[seq, out=15, in=165] (x10); 
    \path  (x10) edge[seq,out=-165,in=-15] (x11); 
    \node at   (1.7,1.95) (x00bis) {};            
    \path (x11) edge[-,line width=1.5mm] (x00bis) ; 
    \draw[-triangle 60,line width=1.5pt] (x00bis) -- +(-.03,-0.03); 
\end{tikzpicture}}


\def\nonmonSIGgen{
  \begin{tikzpicture}[trans/.style ={->,>=angle
        60,out=5,in=180},seq/.style={-open triangle 60}]
    \node[rectangle,fill=black!30]  at (0,1.5){\parbox{1.2cm}{\centerline{~}\vspace{0.2cm}}};
    \node[rectangle,fill=black!30]  at (2.5,0){\parbox{1.2cm}{\centerline{~}\vspace{0.2cm}}};
    \node[rectangle,fill=black!30]  at (2.5,1.5){\parbox{1.2cm}{\centerline{~}\vspace{0.2cm}}};
    \node  at (0,0) (x11) {$x=11$};   
    \node  at (0,1.5) (x01) {$\bar{x}{0}=01$};   
    \node  at (2.5,0) (x10) {$\bar{x}{1}=10$};
    \node  at (2.5,1.5) (x00) {$\bar{x}{\V}=00$};
    \path  (x11) edge[seq,out=105,in=-105] (x01) ;
    \path  (x01) edge[seq, out=-75,in=75] (x11) ;
    \path  (x11) edge[seq, out=15, in=165] (x10); 
    \path  (x10) edge[seq,out=-165,in=-15] (x11); 
    \node at   (2.1,1.3) (x00bis) {};            
    \path (x11) edge[-,line width=1.5mm] (x00bis) ; 
    \draw[-triangle 60,line width=1.5pt] (x00bis) -- +(-.03,-0.02); 
\end{tikzpicture}}

\def\nonmonSIGGEN{
  \begin{tikzpicture}[trans/.style ={->,>=angle
        60,out=5,in=180},seq/.style={-open triangle 60}]
    \node[rectangle,fill=black!30]  at (0,1.5){\parbox{1.6cm}{\centerline{~}\vspace{0.2cm}}};
    \node[rectangle,fill=black!30]  at (2.5,0){\parbox{1.6cm}{\centerline{~}\vspace{0.2cm}}};
    \node[rectangle,fill=black!30]  at (2.5,1.5){\parbox{1.6cm}{\centerline{~}\vspace{0.2cm}}};
    \node  at (0,0) (x11) {$x$};   
    \node  at (0,1.5) (x01) {$\bar{x}{0}=\bar{y}{1}$};   
    \node  at (2.5,0) (x10) {$\bar{x}{1}=\bar{y}{0}$};
    \node  at (2.5,1.5) (x00) {$\bar{x}{\set{0,1}}=y$};
    \path  (x11) edge[seq,out=105,in=-105] (x01) ;
    \path  (x01) edge[seq, out=-75,in=75] (x11) ;
    \path  (x11) edge[seq, out=15, in=165] (x10); 
    \path  (x10) edge[seq,out=-165,in=-15] (x11); 
    \node at   (2.1,1.3) (x00bis) {};            
    \path (x11) edge[-,line width=1.5mm] (x00bis) ; 
    \draw[-triangle 60,line width=1.5pt] (x00bis) -- +(-.03,-0.02); 
\end{tikzpicture}}

\def\nonmon{
\begin{wrapfigure}[16]{r}{4.5cm}
\centering \nonmonSIGGEN
\caption{\sig{} of a strongly connected \Ban{} $\N$ of size $2$ s.t. $\uf_0,\uf_1\in\set{x\mapsto x_0\oplus
        x_1,\, x\mapsto \neg(x_0\oplus x_1)}$, augmented with normal transition
        $x\nonseq y$. $\N$ is \destroys{-}sensitive to synchronism.}
\label{fig:nonmon}   
\end{wrapfigure}}


\def\monSIGGEN{
  \begin{tikzpicture}[trans/.style ={->,>=angle
        60,out=5,in=180},seq/.style={-open triangle 60},seqS/.style={open triangle 60-open triangle 60}]
     \node[rectangle,fill=black!30]  at (8.5,0){\parbox{1.2cm}{\centerline{~}\vspace{0.2cm}}};
    \node  at (0.5,0)  (y2)  {$ab\neg c$};
    \node  at (2,-1) (x12) {$\neg ab\neg c$};
    \node  at (2,1)  (x02) {$a\neg b\neg c$};
    \node  at (4,1)  (x0)  {$a\neg bc$};
    \node  at (4,-1) (x1)  {$\neg abc$};
    \node  at (6,0)  (x)   {$x=\neg a\neg b c$};
    \node  at (3.3,0)  (x2)  {$\neg a\neg b \neg c$};
    \node  at (8.5,0) (y)   {$y=abc$};
    \path  (x) edge[seq] (x0) edge[seq] (x1)  ;
    \path  (x0) edge[seqS] (x02)  ;
    \path  (x1) edge[seqS] (x12)  ;
    \path  (y2) edge[seq] (x12) edge[seq] (x02)  ;
    \path  (x02) edge[seq] (x2)  ;
    \path  (x12) edge[seq] (x2)  ;
    \path  (x2) edge[seqS] (x)  ;
    \draw[seq](y2) to[,out=70,in=180] (2,1.5)  to[out=0,in=180] (7,1.5)
        to[out=0,in=110] (y); 
%
    \path (x) edge[-,line width=1.5mm] (7.7,0) ; 
    \draw[-triangle 60,line width=1.5pt] (7.7,0) -- +(.1,0); 
\end{tikzpicture}}

\def 
\begin{figure}
  \begin{tabular}{cc}
  \scalebox{0.5}{\input{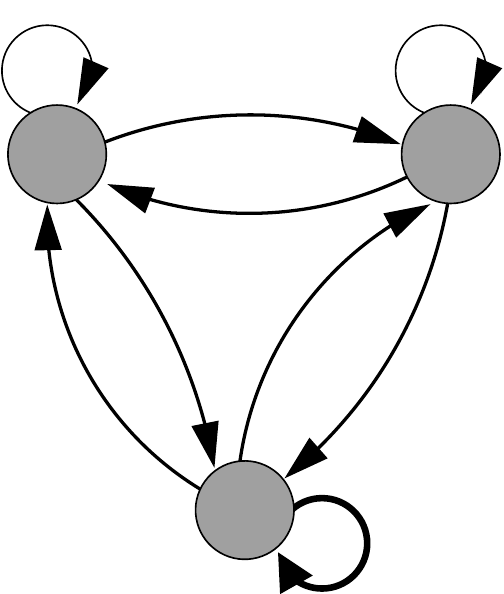_t}} ~~&\monSIGGEN
  \end{tabular}
\caption{Generic signed structure ($\forall
  i,j\in\V,\, \sbool_{ji}=\sign(j,i)=\sign(i,j)$) and modified \sig{} $\SIGbis$
  of all monotone \Ban{s} of size $3$ that are very sensitive
  (necessarily \destroys-sensitive) to synchronism, \eg $\N=\set{\uf_0:x\mapsto
  x_2\vee(x_0\wedge \neg x_1),\, \uf_1:x\mapsto x_2\vee(\neg x_0\wedge
  x_1),\, \uf_2:x\mapsto \neg x_2\wedge (x_0\vee x_1)}$. For all instances of
  these \Ban{s}, in the starting point $x$ of the normal transition,
  $\uf_2\big(\,\uf_0(x)\,\uf_1(x)\,x_2\,\big)\in\set{x_0\oplus
  x_1,\, \neg(x_0\oplus x_1)}$.}
\label{fig:new}
\end{figure}
{ 
\begin{figure}
  \begin{tabular}{cc}
  \scalebox{0.5}{\input{exmon.pdf_t}} ~~&\monSIGGEN
  \end{tabular}
\caption{Generic signed structure ($\forall
  i,j\in\V,\, \sbool_{ji}=\sign(j,i)=\sign(i,j)$) and modified \sig{} $\SIGbis$
  of all monotone \Ban{s} of size $3$ that are very sensitive
  (necessarily \destroys-sensitive) to synchronism, \eg $\N=\set{\uf_0:x\mapsto
  x_2\vee(x_0\wedge \neg x_1),\, \uf_1:x\mapsto x_2\vee(\neg x_0\wedge
  x_1),\, \uf_2:x\mapsto \neg x_2\wedge (x_0\vee x_1)}$. For all instances of
  these \Ban{s}, in the starting point $x$ of the normal transition,
  $\uf_2\big(\,\uf_0(x)\,\uf_1(x)\,x_2\,\big)\in\set{x_0\oplus
  x_1,\, \neg(x_0\oplus x_1)}$.}
\label{fig:new}
\end{figure}
}

\subsection*{Impact of  synchronous transitions}

Let us introduce some new vocabulary to describe the \tg{s} of $\N$. Stable
configurations and terminal strongly connected components of these graphs are
called \DEF{attractors} (abusing language because it may be that an attractor
doesn't attract anything).  Attractors that are not stable configurations are
said to be unstable.  Now, while presenting notations relative to the \sig{}
(under-scripted by '$\mathrm{a}$'), let us continue introducing terminology
relative to any \tg{} of $\N$, in particular its \gig{}. $\forall x\in\Bn$, we
let $\orbita(x)=\set{y\in\Bn\,;\, x\seqRT y}$ and $\Ba(x)=\set{y\,;\, y\seqRT
  x}$. Also, we let $\Aa(x)$ denote the set of attractors that $x$ can reach in
$\SIG$. We say that a configuration $x$ is \DEF{recurrent} when it belongs to an
attractor and we denote this attractor by $\atta{x}$ (then,
$\Aa(x)=\set{\atta{x}}$). The basin of an attractor $\atta{x}$ is
$\Ba(\atta{x})=\Ba(x)\setminus \atta{x}$. Non-recurrent configurations are
called \DEF{transient}.
 \smallskip


Let us consider an arbitrary synchronous transition $x\pll y$ of $\N$ and let
$\SIGbis=(\Bn, \AT\cup\set{(x,y)})$ denote the transition graph obtained by
adding this transition to the \sig{} $\SIG$. We introduce notations $\Ae(z)$,
$\Be(z)$, $\orbit(z)$ and $\att{x}$ relative to $\SIGbis$ naturally as we did
above for $\SIG$.  In the sequel, we say that an attractor $A$ of $\SIG$ is
\emph{destroyed} by $x\pll y$ if all its configurations are transient in
$\SIGbis$. Generally, since $\forall z\in\Ba(x)\cup\set{x},\ \Ae(z)=\Aa(z)\cup
\Aa(y)\comma$ the addition of $x\pll y$ to $\SIG$ can have several possible
consequences on the asymptotic evolution of $\N$ starting in a configuration
$z\in\Ba(x)\cup\set{x}$. We list them now exhaustively.

\begin{itemize}
\item[1.] We say that it has \DEF{no impact} when $x$ is transient in $\SIG$ and
  $\Aa(y)\subset \Aa(x)=\Ae(x)$. In this case, $x\pll y$ `only' adds to $\SIG$
  some new derivations from $x$ to the configurations of the orbit $\orbita(x)$
  of $x$.  It does not change the result of any network evolution. In
  particular, if $x\pll y$ is \seqable{}, then it shortcuts some derivations
  starting in $x$. But on the contrary, it can also deviate some derivations
  (when $\exists z\in\orbita(x)\cap\orbita(y)$ s.t.  $y \seqRT z$ is no shorter
  than $x\seqRT z$).
\end{itemize}

Obviously, all \sy{} transitions $x\pll y$ that \emph{do} have an impact on the
asymptotic evolution of $\N$ are \normal.


\begin{itemize}
\item[2.] We say that transition $x\nonseq y$ has little or \DEF{\free-impact}
  (\cf \figref{ex:free}) if $x$ is transient in $\SIG$ and $\Ae(x)=\Aa(x)\cup
  \Aa(y)\neq\Aa(x)$. Here, the addition of $x\nonseq y$ adds some new degrees of
  \underline{{\bf f}}reedom to the asymptotic outcomes of the evolutions of
  $\N$ from any configuration $z\in\Ba(x)\cup\set{x}$.Thus, it causes the growth
  of the basins $\Ba(A),\ A\in \Aa(y)$.
\end{itemize}
\begin{figure}
  \begin{tabular}{m{0.4\textwidth}c@{\hspace{0.5cm}}c}
    \freeLTF &  \raisebox{-0.5cm}{\scalebox{0.5}{\input{exfree.pdf_t}}} & \raisebox{-1cm}{\freeSIG}
  \end{tabular}
  \caption{Transition functions, structure and modified \sig{} $\SIGbis$ of a
    \Ban{} $\N$ whose \normal{} transition $11\nonseq 00$ has \free-impact. This
    is a special case of \free-impact induced by one critical, Hamiltonian cycle
    with no automata outside of it. Its impact, precisely, consists in making
    reachable an reachable attractor (\cf Lemmas \ref{lemma: V} and \ref{lemma: V sensitive}).}
\label{ex:free}
\end{figure}
Note that with addition of synchronous transitions that have no or \free-impact,
the set of recurrent configurations of $\SIG$ equals that of $\SIGbis$.



\begin{itemize}
\item[3.] We say that transition $x\nonseq y$ has \DEF{\grows-impact} (\cf the
  end of Example \ref{Ex:contrex}) on the asymptotic evolution of $\N$ when,
  in $\SIG$, $x$ is recurrent, $y$ is transient and $\Aa(y)=
  \Aa(x)=\{\atta{x}\}$. In this case, $y$ is connected to $x$ and recurrent in
  $\SIGbis$.  The addition of $x\nonseq y$ to $\SIG$ causes attractor $\atta{x}$
  to absorb all derivations from $y$ to $\atta{x}$ and \underline{{\bf g}}row
  into $\att{x}=\att{y}$ without being destroyed.

\item[4.] We say that transition $x\nonseq y$ has \DEF{\destroys-impact} (\cf
  Example \ref{Ex:contrex}) if $x$ and $y$ are both recurrent in $\SIG$ and
  $\Aa(y)\setminus \Aa(x)\neq \emptyset$. In this case, the addition of
  $x\nonseq y$ \underline{{\bf d}}estroys the unstable attractor $\atta{x}$ by
  emptying it into (the basins of) the attractors $A\in \Aa(y)\setminus \Aa(x)$.

  %
\end{itemize}

\figcontrex

It can be checked that the four types of impact listed above are disjoint and
cover all possible cases. It follows as a particular consequence that a unique
\normal{} transition is not enough to merge attractors.  Let us emphasise that a
configuration that is recurrent in the \sig{} can become transient with the
addition of synchronism (if $\N$ has a \normal{} transition with
\destroys-impact), in particular, it can become transient in the
\gig{}. Conversely, synchronism can turn a transient configuration into a
recurrent one (if $\N$ has a transition with \grows-impact). Synchronism can
however not create new attractors from scratch. Indeed, if all configurations of
a set $X\subset\Bn$ are transient in the \sig{}, then in the \gig{} as well as
in its sub-graph the \sig{}, there necessarily exists a derivation outgoing $X$.
\smallskip

The addition of $x\nonseq y$ to the \sig{} has no or little (\ie \free-) impact
when $x$ is transient in the \sig{}. To change the asymptotics of $\N$ (rather
than just some of its evolutions towards it), $x\nonseq y$ needs to have \grows-
or \destroys-impact. And for this, in the \sig{}, an unstable attractor
$\atta{x}$ is needed. It can only be induced by a negative cycle in the
structure $\G$ of $\N$ \cite{Richard2010}.  Further, considering Hamiltonian
critical cycles again as in Lemma~\ref{lemma: V}, the last point of this section
evidences the need to embed \frableC{s} in a larger, structural 'environment' to
obtain \grows- and \destroys-impact transitions.  In other terms, $\N$ must have
a \frableC{} $\C=(\V_\C,\A_\C)$ as well as nodes $i\in\V\setminus\V_\C\neq
\emptyset$ outside of it if the addition of synchronism is to significantly
impact on its behaviour and change its asymptotics.

\begin{lemma} Let $\N$ be a \Ban{} with no  \normal{} transitions of  size
  smaller than its size $n$. Then, any transition $x\nonseq y$ either has no
  impact on the asymptotics of $\N$ or it has \free-impact. In the latter case,
  $y$ is a stable with an empty basin $\Ba(y)=\emptyset$ in the \sig{} and all
  nodes of the structure $\G$ of $\N$ have a positive loop.
  \label{lemma: V sensitive}
\end{lemma}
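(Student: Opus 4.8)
The plan is to reduce to the Hamiltonian situation of Lemma~\ref{lemma: V} and then read off the impact type from the stability of $y$. First I would note that the hypothesis forces any $x\nonseq y$ to have size exactly $n$: it is synchronous, hence of size $\geq 2$, and it is not a \normal{} transition of size smaller than $n$, so $\Hdiff(x,y)=\V=\U(x)$ and $y=\bar{x}{\V}$. By Proposition~\ref{prop:romp} the whole of $\V$ then lies in a single $x$-\fredC{}, which is therefore Hamiltonian; equivalently (cf. the discussion preceding the lemma and Lemma~\ref{lemma: V}) every \frableC{} of $\N$ has node set $\V$. For each $i\in\V$ the transition $x\trans\bar{x}{\V\setminus\set{i}}=\bar{y}{i}$ has size $n-1<n$, hence is \seqable{} and, by induction on the transition size, totally so: $x\seqRT\bar{y}{i}$. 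Were some $i$ unstable in $\bar{y}{i}$ we could append $\bar{y}{i}\seq y$ and obtain $x\seqRT y$, contradicting that $x\nonseq y$ is not \seqable{}; thus $i\in\nU(\bar{y}{i})$ for every $i$, which says that $y$ admits no incoming atomic transition, i.e. $\Ba(y)=\emptyset$ in $\SIG$.

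The main obstacle is to show that $x$ is \emph{transient} in $\SIG$: since the four impact types are exhaustive and both \grows- and \destroys-impact require $x$ to be recurrent, this leaves only no impact or \free-impact. I would prove transience by showing, as for $y$, that $x$ has no incoming atomic edge. Indeed $\bar{x}{i}\seq x$ would require $i\in\U(\bar{x}{i})$, and since $\U(x)=\V$ gives $i\in\U(x)$, Lemma~\ref{lem:loop} would force $\sign(i,i)=-1$. But a negative loop $(i,i)$ is frustrated in every configuration and is realised at some configuration where $i$ is unstable, so it constitutes a \frableC{} with node set $\set{i}\neq\V$ (recall $n\geq 2$) --- excluded here because all \frableC{s} are Hamiltonian. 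Hence no node bears a negative loop, $x$ receives no atomic transition, and $x$ is transient. This is the one place where monotony (entering through Lemma~\ref{lem:loop}) and the Hamiltonicity content of the hypothesis are both genuinely used; weakening either permits a recurrent $x$ and a \grows- or \destroys-impact.

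It then remains to split on the stability of $y$. Every atomic successor of $y$ is some $\bar{y}{i}$ with $i\in\U(y)$, and $\bar{y}{i}\in\orbita(x)$ by the first step, so $\Aa(\bar{y}{i})\subseteq\Aa(x)$. If $y$ is not stable it is transient (empty basin, non-empty out-degree) and $\Aa(y)=\bigcup_{i\in\U(y)}\Aa(\bar{y}{i})\subseteq\Aa(x)$, whence $\Ae(x)=\Aa(x)\cup\Aa(y)=\Aa(x)$ and $x\nonseq y$ has no impact. If $y$ is stable then $\Aa(y)=\set{\atta{y}}$ with $\atta{y}=\set{y}$, while $\atta{y}\notin\Aa(x)$ because $\Ba(y)=\emptyset$ keeps $y$ unreachable; thus $\Aa(y)\setminus\Aa(x)\neq\emptyset$ and the transition has \free-impact. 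In this last case $\nU(y)=\V$, so $i\in\nU(y)\cap\nU(\bar{y}{i})$ for every $i$ and Lemma~\ref{lem:loop} gives $\sign(i,i)=+1$, i.e. every node of $\G$ carries a positive loop; together with $\Ba(y)=\emptyset$ this is exactly the description of the \free-impact case.
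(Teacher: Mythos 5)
Your reduction to size $n$, the derivation $x\seqRT \bar{y}{i}$ for every $i$ (by induction on transition size), the conclusion $\Ba(y)=\emptyset$, and the final case split on the stability of $y$ (with the positive loops coming from Lemma~\ref{lem:loop}) all match the paper's proof, which packages the last two points as a citation of Lemma~\ref{lemma: V}. The genuine gap is the sentence claiming that the hypothesis is ``equivalently'' the statement that every \frableC{} of $\N$ has node set $\V$; your entire argument for the transience of $x$ rests on this equivalence, and it is false. Proposition~\ref{prop:romp} gives only one direction: a \normal{} transition forces its flipped set inside a single $x$-\frableC{}. It does not let you convert the absence of small \normal{} transitions into the absence of small \frableC{s}, because a small critical cycle need not generate any \normal{} transition at all. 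The extreme case is a negative loop $(i,i)$: it is frustrated in every configuration, so $(\set{i},\set{(i,i)})$ is an $x$-critical cycle whenever $i\in\U(x)$, yet it can never produce a \normal{} transition, since transitions of size $1$ are \as{} and hence \seqable{} by definition. So the stated hypothesis does not exclude negative loops, and your chain ``no negative loop $\Rightarrow$ no atomic edge into $x$ $\Rightarrow$ $x$ transient'' breaks at its first link.

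Moreover the failure is not only in the justification: transience of $x$ is genuinely not a consequence of the stated hypothesis. Take $n=2$ and $\uf_0=\uf_1:x\mapsto\neg(x_0\wedge x_1)$ (monotone; both automata carry negative loops). The hypothesis holds vacuously, since any \normal{} transition has size $\geq 2=n$. One checks $\U(11)=\U(00)=\set{0,1}$, $\U(01)=\set{0}$, $\U(10)=\set{1}$, so that $11\nonseq 00$ is \normal{} (asynchronously $11$ reaches only $\set{01,10,11}$), while $\set{01,10,11}$ is a terminal strongly connected component of $\SIG$: here $x=11$ is recurrent, $y=00$ is transient, and $\Aa(00)=\Aa(11)=\set{\atta{11}}$, which is exactly \grows-impact in the paper's classification. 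So under the literal hypothesis the transience claim (and with it the lemma's dichotomy) cannot be recovered; it does hold under the stronger hypothesis of Lemma~\ref{lemma: V} (all \frableC{s} have node set $\V$), which excludes negative loops and under which your argument is sound. Your instinct did point at a real issue~--~the paper's own proof never verifies that $x$ is transient and concludes ``no impact'' from $\Aa(y)\subseteq\Aa(x)$ alone~--~but the equivalence you invoked to close that hole is precisely what the weaker, stated hypothesis fails to provide.
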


\begin{proof} Let $x\nonseq y=\bar{x}{\V}$ be a \normal{} transition of
  $\N$. Since $\forall i\in\V,\ \V\setminus \set{i}\subset \U(x)=\V$,
  Proposition~\ref{prop:romp} implies $x\seqRT \bar{y}{i},\ \forall
  i\in\V$. Thus, $\forall z\in\Bn,\ y\seq z \Rar\ x\seqRT z$. And either
  $\U(y)\neq \emptyset$ in which case $\Aa(y)\subset\Aa(x)$ (and $x\nonseq y$
  has no impact), or $y$ is stable in which case,  by Lemma~\ref{lemma: V}, its basin is
  empty  and all automata  bear a positive loop.\qed
\end{proof}

\begin{example}[\destroys- and \grows-impact] 
  Let $\N=\set{f_i\,;\, i<4}$ be the \Ban{} of size $4$ whose \ltf{s} and signed
  structure are given below:

  {\begin{tabular}{m{0.56\textwidth}m{0.6\textwidth}} \destroysLTF
      & \scalebox{0.5}{} \end{tabular}} The \gig{} of
      this \Ban{} has two attractors: one unstable and one stable (configuration
      $0000$).  When $x_0=x_1=1$ and $x_2=x_3=0$, the simultaneous update of
      automata $0$ and $1$ has an effect that cannot be mimicked by a series of
      atomic updates (\cf \figref{SIG contrex}). If it could, strongly connected
      component $\atta{1100}$ would not be terminal in the \sig{}. From
      Propositions \ref{prop:critical cycles} and \ref{prop:romp} we know that
      this is essentially due to the positive cycle of length $2$ induced by
      automata $0$ and $1$.\smallskip
 
  Building on this example, we can derive an example of a \Ban{} with a
  \grows-impact transition. Indeed, consider a fifth automaton $i=4\in\V$ s.t.
  $ \uf_4(x)= \neg(x_0\vee x_1\vee x_2\vee x_3)\wedge\neg x_4$ and let
  $\uf_0(x)= \uh_0(x)\vee (x_4\wedge \neg x_1\wedge\neg x_3)$ and $\forall i\in
  \set{1,2,3}$, let $\uf_i(x)= \uh_i(x)$.  It can be checked that $\uh_0(x)\neq
  \uf_0(x)\ \implies\ x=00001$ and as a consequence, adding the same \normal{}
  transition as before to the \sig{} of this new \Ban{} yields the \tg{} below:

     \centerline{ \growsGT}
  \label{Ex:contrex}
\end{example}

\subsection*{Synchronism sensitivity}

On the basis of the previous classification of the impact of \sy{} transitions,
we now take a more abstract point of view to propose a list of the different
types of sensitivity that a \Ban{} may have to (the addition of) synchronism.
Naturally, we say that $\N$ has \DEF{no sensitivity to synchronism} if none of
its \normal{} synchronous transitions has any impact (\cf Point 1 in the
previous section). We say that it has \DEF{\free{-} and \growsS{} to
  synchronism} when, respectively, it has \normal{} transitions with \free{-}
and \grows{-}impact (\cf Point 2 and 3). When $\N$ has \normal{} transitions
with \destroys-impact (\cf Point 4), two cases may occur. Indeed, let $x\nonseq
y$ be a \normal{} transition of $\N$ that has \destroys-impact. Then, there may
be another \destroys-impact \normal{} transition $y'\nonseq x'$ such that
$\atta{x}=\atta{x'}\neq \atta{y}=\atta{y'}$, \ie $x$ and $x'$ on the one hand,
and $y$ and $y'$ on the other belong to the same unstable attractors. In this
case, the two \normal{} transitions $x\nonseq y$ and $y'\nonseq x'$ cause 
attractors $\atta{x}$ and $ \atta{y}$ of the \sig{} to \underline{{\bf m}}erge
($\att{x}=\att{y}=\atta{x}\uplus\atta{y}$). Hence, $\N$ is said to have
\DEF{\mergesS{} to synchronism}. If there is no other \normal{} transition
connecting $\atta{y}$ to $\atta{x}$, then attractor $\atta{x}$ is effectively
\underline{{\bf d}}estroyed by the addition of $x\nonseq y$ and $\N$ is said to
have \DEF{\destroysS{} to synchronism}.  This and the results presented above as
well as, notably, the series of remarks made at the end of the previous section
yield Proposition \ref{prop:sensitivity} below.

\begin{proposition}
  \begin{enumerate}
  \item Sensitivity to synchronism requires the existence of a \frableC{}, and
    thus of an positive cycle with an even length or a negative cycle with an
    odd length.\label{need frableC} 

  \item \grows{-}, \destroys{-} and \mergesS{} require the existence of a
    \frableC{} of length strictly smaller than the \Ban{} size as well as of a
    negative cycle.\label{need more}

  \item Unless $\N$ has a Hamiltonian \frableC{} and positive loops on all of
    its automata, to have \freeS{}, $\N$ also needs to have a \frableC{} of
    length strictly smaller than the \Ban{} size.\label{need F} 
        
  \end{enumerate}
\label{prop:sensitivity}
\end{proposition}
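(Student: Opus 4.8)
The plan is to drive all three parts from one structural observation about \normal{} transitions, read off from the crucial consequence of Proposition~\ref{prop:romp}. Let $x\nonseq y$ be a \normal{} transition. Since $x\pll y$ does not break into smaller transitions, the derivation granted by Proposition~\ref{prop:romp} must reduce to its single step $\Hdiff_0=\Hdiff(x,y)$ (any $m>1$ would sequentialise it), and as $\Hdist(x,y)>1$ its consequence places all of $\Hdiff(x,y)$ on one common $x$-\frableC{}. Following the proof, this amounts to the digraph $\H=(\Hdiff(x,y),\frus(x))$ being a single strongly connected component, so its nodes form an $x$-\frableC{} $\C$ with $\V_\C=\Hdiff(x,y)$; moreover any simple sub-cycle $\C'$ of $\C$ is again $x$-critical, since $\V_{\C'}\subseteq\V_\C\subseteq\U(x)$ and $\A_{\C'}\subseteq\A_\C\subseteq\frus(x)$, and has length at most $\Hdist(x,y)$. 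Thus a \normal{} transition furnishes an $x$-\frableC{} spanning exactly $\Hdiff(x,y)$, together with a simple critical cycle of length $\le\Hdist(x,y)$.

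For part~\ref{need frableC}, sensitivity means some \normal{} transition has impact; $\N$ then has a \normal{} transition, the observation yields a \frableC{}, and Proposition~\ref{prop:critical cycles} turns it into a positive even or a negative odd cycle. For part~\ref{need more} I handle the two requirements in turn. The source $x$ of a \normal{} transition with \grows- or \destroys-impact (the \mergesS{} case falling under the latter) is recurrent in $\SIG$ while $\u(x)=|\U(x)|\ge 2$, since it is the source of a synchronous transition; hence $x$ is not stable and $\atta{x}$ is an \emph{unstable} attractor of $\SIG$, which by the negative-cycle result of \cite{Richard2010} invoked above forces a negative cycle in $\G$. For the short \frableC{}, \mergesS{} is a special case of \destroys-impact, and neither a \grows- nor a \destroys-impact is ``no impact'' or \free-impact; the contrapositive of Lemma~\ref{lemma: V sensitive} then yields a \normal{} transition $x\nonseq y$ of size $\Hdist(x,y)<n$, whose simple critical cycle from the observation has length $<n$.

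For part~\ref{need F}, suppose $\N$ is \freeS{} yet has no \frableC{} of length $<n$. By the observation $\N$ cannot then have a \normal{} transition of size $<n$ (such a transition would provide a simple critical cycle of length $<n$), so Lemma~\ref{lemma: V sensitive} applies: every \normal{} transition has no impact or \free-impact, each has size exactly $n$, and in the \free-impact case $y=\bar{x}{\V}$ is stable with empty basin while every node of $\G$ carries a positive loop. As $\N$ is \freeS{}, one such size-$n$ transition exists; by the observation $\Hdiff(x,y)=\V$ is the node set of a \frableC{}, necessarily Hamiltonian, and all automata bear positive loops. This is exactly the excepted configuration, which establishes the claim in contrapositive form.

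The step I expect to be delicate is the observation itself --- specifically, pinning the node set of the cycle attached to a \normal{} transition to \emph{equal} $\Hdiff(x,y)$ (rather than merely contain it), which relies on the paper's convention that cycles may repeat nodes, and then extracting simultaneously a \emph{spanning} critical cycle (Hamiltonian when $\Hdiff(x,y)=\V$) for part~\ref{need F} and a \emph{simple} critical cycle of controlled length for part~\ref{need more}. Granting this, the three parts are short deductions from Proposition~\ref{prop:critical cycles}, the \cite{Richard2010} negative-cycle theorem, and Lemma~\ref{lemma: V sensitive}.
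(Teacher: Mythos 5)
Your proof is correct and takes essentially the same route as the paper, which offers no self-contained argument for Proposition~\ref{prop:sensitivity} but simply declares it a consequence of the preceding material --- namely Proposition~\ref{prop:romp} (a \normal{} transition forces all of $\Hdiff(x,y)$ onto a common $x$-\frableC{}), Proposition~\ref{prop:critical cycles}, the negative-cycle result of \cite{Richard2010} for unstable attractors of the \sig{}, and Lemmas~\ref{lemma: V} and~\ref{lemma: V sensitive} --- which are exactly the ingredients you assemble, including the same contrapositive use of Lemma~\ref{lemma: V sensitive} in parts~\ref{need more} and~\ref{need F}. The step you flag as delicate is indeed the only loose joint, but it is inherited from the paper's own Proposition~\ref{prop:romp} (whose proof identifies the non-singleton levels with strongly connected components of $\H=(\Hdiff(x,y),\frus(x))$, glossing over the fact that a strongly connected digraph need not carry a spanning closed walk without repeated arcs), and nothing you need is endangered: for parts~\ref{need frableC} and~\ref{need more} it suffices that the strongly connected $\H$ contain \emph{some} simple cycle, which is automatically $x$-critical and of length at most $|\Hdiff(x,y)|$, while for part~\ref{need F} the containment form of Proposition~\ref{prop:romp} already suffices, since a \frableC{} containing all of $\Hdiff(x,y)=\V$ is necessarily Hamiltonian.
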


\subsection*{Sensitivity to synchronism \& non-monotony}\label{sec:LM}

Obviously, to be sensitive to synchronism, a \Ban{} must involve at least two
automata.  It can be checked that there are no monotone \Ban{s} of size $2$ that
are \destroys- or \merges-sensitive (we say \DEF{very sensitive}) to
synchronism, but there are some non-monotone ones (\cf
\figref{fig:nonmon}).\smallskip


\nonmon

 However, interestingly, the monotone,
\destroys-sensitive \Ban{} of Example \ref{Ex:contrex} actually also involves
non-monotone actions. Indeed, it only involves a few monotone individual
interactions between four automata but these are architectured into a
\emph{widget} that can globally \emph{mimic} a punctual non-monotone action in
the right configuration and with the right synchronous update of automata
states.  More precisely but informally, in this widget, a non-monotone action is
\emph{structurally} split into two parts. These two parts consist in the two
halves of a {\sc xor}: $(x_0\,x_1)\mapsto x_0\wedge \neg x_1$ and
$(x_0\,x_1)\mapsto \neg x_0\wedge x_1$. They are encoded separately in the
\ltf{s} $\uf_0$ and $\uf_1$ of two different automata connected by what can be a
\frableC{} by Proposition \ref{prop:critical cycles}.  When the controls on
these two parts are lifted (\ie when $x_2=x_3=0$ so that we do indeed have
$\uf_0(x)=x_0\wedge \neg x_1$ and $\uf_1(x)=\neg x_0\wedge x_1$), the
synchronous update of automata $0$ and $1$ simultaneously applies $\uf_0$ and
$\uf_1$. Instantly, this amounts to combining influences underwent by $0$ and
$1$ by ``simulating'' a {\sc or} connector between their \ltf{s}, thereby
outputting the global action $\uf_0(x)\vee \uf_1(x)$. Precisely, this puts
together the two halves of a {\sc xor} with a $\vee$ and produces a global
non-monotone action.
Examining the widget of Example \ref{Ex:contrex}, one can notice that the automata
that it involves have different roles.  Roughly, automata $0$ and $1$ encode the
non-monotone action mentioned above.  The role of automata $2$ and $3$ is to
make ``use'' of it and ensure the necessary unstable attractor. This
attractor is made dependent on automata $0$ and $1$ by requiring $x_0\vee
x_1=1$. More precisely, the widget is designed so that the
unstable attractor is characterised by this condition. In the \sig{}, if the
condition becomes true, it remains true. Every configuration $x$ such that
$x_0=x_1=0$ reaches the stable configuration. \medskip 

\begin{figure}
  \begin{tabular}{cc}
  \scalebox{0.5}{\input{exmon.pdf_t}} ~~&\monSIGGEN
  \end{tabular}
\caption{Generic signed structure ($\forall
  i,j\in\V,\, \sbool_{ji}=\sign(j,i)=\sign(i,j)$) and modified \sig{} $\SIGbis$
  of all monotone \Ban{s} of size $3$ that are very sensitive
  (necessarily \destroys-sensitive) to synchronism, \eg $\N=\set{\uf_0:x\mapsto
  x_2\vee(x_0\wedge \neg x_1),\, \uf_1:x\mapsto x_2\vee(\neg x_0\wedge
  x_1),\, \uf_2:x\mapsto \neg x_2\wedge (x_0\vee x_1)}$. For all instances of
  these \Ban{s}, in the starting point $x$ of the normal transition,
  $\uf_2\big(\,\uf_0(x)\,\uf_1(x)\,x_2\,\big)\in\set{x_0\oplus
  x_1,\, \neg(x_0\oplus x_1)}$.}
\label{fig:new}
\end{figure}

These remarks suggest that there is a tight relationship between significant
sensitivity to synchronism and non-monotony\footnote{Notably, the example given
  by Robert in \cite{Robert1995B} to illustrate the ``bursting of attractors''
  caused by parallelisation (which agrees with \destroysS{}) in a deterministic
  setting, can also be shown to involve non-monotony (it has size $3$).}. Let us
add that the smallest monotone \Ban{s} that are sensitive to synchronism have
size $3$. They are monotone encodings of the non-monotone sensitive \Ban{s} of
size $2$ (\cf \figref{fig:nonmon}). This is proven by building around a normal
transition $x\in\B^3\nonseq y\in\B^3$ (that must satisfy $\Hdist(x,y)=2<3$ by
Lemma~\ref{lemma: V sensitive}) and substantially exploiting Lemmas
\ref{lem:loop} and \ref{lemma: adding frus} that hold with the hypothesis of
monotony. From this we derive in particular that all such \Ban{s} have an \sig{}
and a signed structure of the form of those represented in \figref{fig:new}, and
they have \destroysS{}.




\subsection*{Conclusion and perspectives}

Intuitively, for monotone \Ban{s}, frustrations and thus local instabilities are
best maintained by synchronism.  In particular, the parallel update schedule
systematically exploits all this momenta in each network configuration to
perform all possible changes. Thus, it is known to have a tendency to induce
what are sometimes considered as \emph{artefact behaviours}, unstable and
dependant on its strong constraint of synchronism.  In \cite{LATA2012} we
conjectured and argued that the more ``intricate'' the network structure, that
is, the more interconnected are its underlying cycles, the less chances do local
instabilities have to be sustained. And this was observed under the parallel
update schedule which is especially good at maintaining instabilities, so it
seems that cycle intersections have a strong propensity to reduce the
sensitivity of network behaviours to one of the characteristic effects of
synchronism. In this paper, we investigated further in these lines, focusing on
synchronism and its effect at a more minute level, that of elementary
transitions. And we also related synchronism sensitivity to structural
properties, namely the existence of critical cycles of positive sign and even
length or of negative sign and odd length embedded in a particular
environment. Also, notably, we have provided an example to evidence that
synchronism in itself may indeed impact significantly on the asymptotic
behaviour of a network: not only can it modify transient behaviours and make
attractors grow, it can also destroy unstable attractors. Contrary to some
traditional intuitions, asymptotically, asynchronism does not necessarily
guaranty a minimum of local instability. Synchronism too can filter
instabilities in asymptotic behaviours.  The disregard that synchronism has had
in theoretical modelling fields supports the importance of this: the time flow
mentioned in the introduction~--~accounted for by the way automata states are
updated, in particular synchronously or not~--~is a determining parameter of
networks behaviours.  What is more, we have argued in the previous section that
non-monotony (typically and minimally captured by a logical {\sc xor} connector)
might be necessary to manifest significant sensitivity to synchronism.  It seems
that monotone influences can be decomposed and stretched out in time whereas
non-monotone influences require specific inputs at a precise instant (otherwise
they act as a monotone influences, so these are defining conditions). Thus,
non-monotony relies on dynamics. And it can be hardwired in the network
definition or it may be \emph{mimicked} punctually by synchronism.  In the
latter case, 'time flow' assembles basic, hardwired operators in a way that is
otherwise impossible. \medskip

We have proven that sensitivity to synchronism requires strong structural
properties. It deserves to be highlighted that this can also be interpreted in
favour of asynchronous formalisations which advantageously yield less voluminous
behaviour descriptions. Indeed, \emph{most of the time}, synchronism only
shortcuts \as{} trajectories. However again, focusing on the asymptotics of a
network behaviour, if complexity is more of an issue than exhaustivity, then
deterministic update schedules such as the parallel update schedule also deserve to
be investigated and studied for themselves.  Thus, this work calls for further
studies to better gauge the determining capacity of update schedules on the
behaviours of networks. We have put forward (especially with Example
\ref{Ex:contrex}) the existence of a ``criticality'' that involves updates whose
effect is to decrease suddenly and non-reversibly the number of local
instabilities. This establishes a new point of view on how networks
work and this way, raises many new questions, \eg \emph{how must interactions
  between unstable automata be organised if their updates are to be
  consequential?} and \emph{how, generally, do local, punctual instabilities
  relate to the global (asymptotic) instability of a network?}.

 \end{document}